\newif\ifdraftpaper
\setlist[enumerate]{itemsep=2mm}
\setlist[itemize]{itemsep=2mm}
\DeclareFontFamily{OT1}{pzc}{}
\DeclareFontShape{OT1}{pzc}{m}{it}{<-> s * [1.00] pzcmi7t}{}
\DeclareMathAlphabet{\mathpzc}{OT1}{pzc}{m}{it}
\newcommand{\Lang}[1]{\mathpzc{Lang}(#1)}
\renewcommand{\phi}{\varphi}
\newcommand{\naw}[1]{{\langle #1\rangle}}
\newcommand{\set}[1]{\{#1\}}
\newcommand{\card}[1]{|#1|}
\newcommand{\ctlks}{CTLK\textsuperscript{*}}
\newcommand{\ehs}{EHS\textsuperscript{+}}
\newcommand{\isrl}{ISRL}
\newcommand{\ehsr}{EHS${}^+_{A\bei{}LN}$}
\newcommand{\ehsd}{EHS${}^+_{BDE}$}
\newcommand{\ehsre}{EHS\textsuperscript{RE}}
\newcommand{\Lehs}{{\cal L}_{\ehs}}
\newcommand{\Lehsre}{{\cal L}_{\ehsre}}
\newcommand{\Var}{\mathit{Var}}
\newcommand{\A}{\mathcal{A}}
\newcommand{\g}{\mathrm{g}}
\newcommand{\ISe}[1]{(\set{L_i, l^0_i, ACT_i, P_i, t_i}_{i
  \in A}, #1)}
\newcommand{\Me}{(S, s_0, t, \set{\sim_i}_{i \in A}, \lambda)}
\newcommand{\PSPACE}{\textsc{PSpace}}
\newcommand{\aei}{\bar{A}}
\newcommand{\bei}{\bar{B}}
\newcommand{\dei}{\bar{D}}
\newcommand{\eei}{\bar{E}}
\newcommand{\oei}{\bar{O}}
\newcommand{\lei}{\bar{L}}
\newcommand{\nei}{\bar{N}}
\newcommand{\modelsb}{\models_B}
\newcommand{\first}{\mathit{first}}
\newcommand{\last}{\mathit{last}}
   \newcommand{\nb}[1]{$\#$\footnote{#1}}
   \newcommand{\nb}[1]{}
\begin{document}
\title{Model Checking Epistemic Halpern-Shoham Logic Extended with Regular Expressions}

\author{Alessio Lomuscio \and Jakub Michaliszyn}
\institute{Imperial College London, UK}

\maketitle

\begin{abstract}
The Epistemic Halpern-Shoham logic (EHS) is a temporal-epistemic logic
that combines the interval operators of the Halpern-Shoham logic with
epistemic modalities. The semantics of EHS is based on interpreted
systems whose labelling function is defined on the endpoints of
intervals. We show that this definition can be generalised by allowing
the labelling function to be based on the whole interval by means of
regular expressions. We prove that all the positive results known for
EHS, notably the attractive complexity of its model checking problem
for some of its fragments, still hold for its generalisation. We also
propose the new logic \ehsre{} which operates on standard Kripke
structures and has expressive power equivalent to that of EHS with
regular expressions. We compare the expressive power of \ehsre{} with
standard temporal logics.
\end{abstract}


\section{Introduction}
Model checking is a leading technique in automatic verification.  The
model checking problem consists of establishing whether a property,
expressed as a logical formula, holds on a system, represented as a
model~\cite{Clarke+99a}. 
Model checking has recently been studied in the context of interval
temporal logic~\cite{LomuscioMichaliszyn13,LomuscioMichaliszyn14c}. In this context temporal
specifications consist of formulas expressing properties of {\em
  intervals} rather than {\em states} as it is traditionally the case
in temporal logic.


Interval temporal logic has a long and succesful tradition in Logic in
Computer Science. The logics ITL~\cite{MO83}, defined by Moszkowski,
and HS~\cite{HS91}, defined by Halpern and Shoham, are the most
commonly used formalisms. ITL suffers from the high-complexity of its
model checking problem which is non-elementary-complete
\cite{Lodaya12}.  In this paper we focus on HS as the basic underlying
framework.
HS is a modal temporal logic in which the elements of a model are
pairs of points in time, or {\em intervals}. For an interval $[p,q]$
it is assumed that $q$ happens no earlier than $p$, but no assumption
is made on the underlying order, which can be discrete, continuous,
linear, branching, etc.

Traditionally, twelve modal operators acting on intervals are defined
in HS. They are: $A$ (``after/meets''), $B$ (``begins''), $D$
(``during''), $E$ (``ends''), $L$ (``later''), $O$ (``overlaps'') and
their duals: $\bar{A}$, $\bar{B}, \bar{D}, \bar{E}, \bar{L}$,
$\bar{O}$.  Some of them are redundant; for example, $B$ and
$E$ can define $D$ (a prefix of a suffix is an
infix)~\cite{Della+13a,Monica+11a}.

The analysis of HS and its fragments is traditionally limited to its
satisfiability problem. This is known to be undecidable in
general~\cite{HS91,BMMSS12,M11}, even when HS is restricted to its
unimodal fragments~\cite{Bresolin+11a}. 
Notable decidable fragments are the $A\bar{A}$ fragment with length constraints~\cite{BDGMS10}, the $AB\bar{B}\bar{L}$  fragment \cite{MPS10}, and the recently
identified Horn fragment~\cite{Artale+15a}. 
Some fragments are decidable only over some particular classes of
orderings.  For example, the $B\bar{B}D\bar{D}L\bar{L}$ fragment was
shown to be decidable over the class of all dense orders~\cite{MPS09},
while the $D$ fragment is undecidable over discrete
orders~\cite{MM11}. The same logic becomes decidable if one assume
that an interval is its own infix \cite{APS10}. While a wealth of
results have been put forward, open questions remain. For example, the
decidability of the $D$ fragments over the class of all orders is
currently open.

\textbf{The logic EHS.}  In applications, temporal logics often appear
in combination with other modalities expressing other aspects of the
system or its components. A notable example is temporal-epistemic
logic~\cite{Fagin+95b} where the knowledge of the components, or {\em
  agents}, is assessed from an information-theoretic point of view.
Temporal-epistemic logic is widely explored in applications, including
security; dedicated model checkers have been
released~\cite{mck-tool,verics-tool,LomuscioQuRaimondi09}.

In the traditional approach, the underlying temporal logic is
state-based, either in its linear or branching variants.  A notable
exception to this is the Epistemic HS logic
(EHS)~\cite{LomuscioMichaliszyn13}, which consists of a combination of
epistemic modalities with the interval-based temporal logic HS. EHS
combines all the HS interval-temporal modalities with standard
epistemic modalities: $K_i$ (``agent $i$ knows that'') and $C_\Gamma$
(``it is a common knowledge in group of agents $\Gamma$ that''). The
logic EIT, a simple fragment of EHS where only epistemic modalities
are allowed, but modalities are interpreted on intervals rather than
points, has been shown to be \PSPACE{}-hard.  Model checking of the
$BDE$-fragment of EHS with epistemic operators is \PSPACE{}-complete.
Finally, in \cite{LomuscioMichaliszyn14c} it was shown that the
$A\bar{B}L$ fragment of EHS has a decidable model checking problem.

The labelling function in the structures considered in
\cite{LomuscioMichaliszyn13} is defined on the endpoints of the
intervals.  This corresponds to the intuitive representation of
intervals as pairs
and is often adopted in the literature. However, other choices are
possible. For example,~\cite{Montanari+14a} considers the labelling
for an interval as the intersection of the labelling of all its
elements. We argue that even more expressive setups are required. 

Assume, for example, that we need to label a whole
process of printing by means of the propositional variable $printing$.
By adopting \cite{Montanari+14a}, by labelling the process with
$printing$, it would follow that every subinterval would need to be
labelled with $printing$ too. This may not correspond
to our intuition.

%
Similarly, if we were to adopt a labelling based on endpoints, and $S$
($E$) is the state where printing starts (ends, respectively), it
would follow that all the intervals starting in $S$ and ending in $E$
have to be labelled with $printing$.  But if more than one process is
present, it follows that the interval starting at the beginning of the
first process and ending at the end of the second one is also labelled
with $printing$, which, again, may be against our intuition.

This is just a simple example (we explore more significant ones in
Section~\ref{sec:examples}); but it suggests that more liberal labellings imposing no
such constraints are called for in this context. From a theoretical
standpoint, it is of interest to generalise previous labelling
approaches and assess the impact these have on the decidability of the
model checking problem. We are not aware of any previous attempt in
this direction in the context of any HS logic.

\begin{center}
 \begin{tikzpicture}[shorten >=1pt,->, scale=1]
  \tikzstyle{vertex}=[circle,draw=black,minimum size=12pt,inner sep=0pt]

\begin{scope}[shift={(0,0)}]
  \foreach \var/\x/\y/\name in {
    S/0/1/s1, 
    ~/1/1/s2, 
    E/2/1/s3,
    S/3/1/s4, 
    ~/4/1/s5, 
    E/5/1/s6}
    \node[vertex] (G-\name) at (\x,\y) {$\var$};

  \foreach \x/\y in {
    1/2,
    2/3,
    3/4,
    4/5,
    5/6}
   \draw (G-s\x) to (G-s\y);

\draw[decorate,decoration={brace,amplitude=10},-] (-0.2,1.2) -- (2.2,1.2) ;
\node (X) at (1.0,1.8) {$printing$};

\draw[decorate,decoration={brace,amplitude=10},-] (2.8,1.2) -- (5.2,1.2) ;
\node (X) at (4,1.8) {$printing$};

\draw[decorate,decoration={brace,amplitude=10},-] (5.2,0.8) -- (-0.2,0.8) ;
\node (X) at (2.5,0.2) {$printing$};

\end{scope}
\end{tikzpicture}
\end{center}

{\bf Contribution.}  We put forward a generalisation of the labelling
functions independently proposed in \cite{LomuscioMichaliszyn13} and
\cite{Montanari+14a}. The novel labelling is defined by using regular
expressions based on the states of the whole interval.  For example,
the process of printing from the example above can now be modelled by
using the regular expression $S{\neg {E}}^*E$.  The models that result
from this labelling are here called interpreted systems with regular
labelling, \isrl{} for short. We study the logic \ehs{}, sharing the
syntax of EHS, but interpreted over \isrl{}, and show that it enjoys
all the positive results known for EHS.

In order to be able to express properties of standard point-based
models, and formally characterise the expressive power of \ehs{}, we also
define and study the logic \ehsre{}. Intuitively, \ehsre{} can be seen
as the result of moving the regular expressions from the labelling
function to the atomic propositions.  We show polynomial time
reductions between the model checking problems for \ehsre{} and \ehs{}
and characterise the expressive power of the former.

\textbf{Related work.}
Initial results for the model checking of HS and some of its variants
have appeared
recently~\cite{LomuscioMichaliszyn13,LomuscioMichaliszyn14c,Montanari+14a}.
The results of this paper generalise those presented
in~\cite{LomuscioMichaliszyn13,LomuscioMichaliszyn14c}.  Our setting
is more expressive than~\cite{LomuscioMichaliszyn13} and further
benefits from the fact that many properties become easier to express.

Note that ITL does allow for regular expressions to be used. Unlike
\ehsre{}, where regular expressions can be used only for propositions,
in ITL they can be used for any subformula. However, ITL expresses
properties of a single interval, while \ehsre{} can express properties
of different branches.  Furthermore, HS enjoys several fragments, such
as the $BDE$ one, with a computationally attractive model checking
problem. This may be of particular use in applications.

Two further formalisms that are related to \ehsre{} are PDL
\cite{HarelTiurynKozen00a} and its linear counterpart LDL
\cite{DeGiacomoVardi13a}.  An epistemic version of PDL, E-PDL, was
proposed in \cite{BenthemEijckKooi06a}.  However, epistemic modalities
in E-PDL are interpreted on points, not intervals as in EHS and
\ehsre{}. This is largely the reason why \ehsre{} is more expressive
than E-PDL and the model checking problem for E-PDL is decidable in
polynomial time~\cite{Lange06}, whereas the model checking problem for
EIT is already \PSPACE{}-hard.  Notice also that E-PDL does not have
backward modalities and can express properties of actions, unlike
\ehsre{}.

Results on the correspondence between regular expressions and HS were
presented in~\cite{MontanariSala13a}, where it was shown that each
$\omega$-regular language can be encoded in the $AB\bar{B}$ fragment
of HS.  The encoding, however, uses additional propositional variables
to label interval, and therefore cannot be used for the model checking
problem.
%
%

\section{Interpreted systems with regular labelling}
We begin by recalling the notions of regular expressions.  Given a set
$X$, the set of regular expressions over $X$, denoted by $RE_X$, is
defined by the following BNF:

\[e ::= \emptyset \mid \epsilon \mid s \mid e;e \mid e+e \mid e^*\]
where $s\in X$. We allow parentheses for grouping and often omit the concatenation
symbol ``$;$''.

For each regular expression $e$, let $\Lang{e}$ stand for the language \emph{denoted} by $e$. Formally,
	$\Lang{\emptyset} = \emptyset$,
	$\Lang{\epsilon} = \set{\epsilon}$,
	$\Lang{s} = \set{s}$,
	$\Lang{e_1;e_2} = \set{w_1w_2 \mid w_1 \in \Lang{e_1} \land w_2 \in \Lang{e_2}}$,
	$\Lang{e_1+e_2} = \Lang{e_1} \cup \Lang{e_2}$, and
	$\Lang{e^*}$ is the smallest set containing $\epsilon$ such that for all $w_1 \in L(e)$ and $w_2 \in \Lang{e^*}$, $w_1w_2 \in \Lang{e^*}$.

Now we generalise interval-based interpreted
systems~\cite{LomuscioMichaliszyn13} to systems with labelling based
on regular expressions.

\begin{definition}
Given a set of agents $A=\{0, 1, \dots, m\}$, an \emph{interpreted
  system with labelling on regular expressions}, ISRL for short, is a
tuple $IS=\ISe{\lambda}$, where:
\begin{itemize}
\setlength\itemsep{0.3em}
\item $L_i$ is a finite set of local states for agent $i$,
\item $l^0_i \in L_i$ is the initial state for agent $i$,
\item $ACT_i$ is a finite set of local actions available to agent $i$,
\item $P_i:L_i \rightarrow 2^{ACT_i}$ is a local protocol function for
  agent $i$, returning the set of possible local actions in a given
  local state,
\item $t_i\subseteq L_i \times ACT \times L_i$, where $ACT = ACT_0
  \times \dots \times ACT_m$, is a local transition relation returning
  the next local state when a joint action is performed by all agents
  on a given local state,
\item $\lambda: \Var{} \to RE_G$ is a labelling function,
  where $G= L_0 \times L_1 \times \dots \times L_m$ is the set of \emph{global configurations} and $\Var{}$ is a finite
  set of propositional variables.
\end{itemize}
\end{definition}
Agent $0$ is often called \emph{the environment}.

We now define models of an IS on sets of paths from its initial configuration.
Let $t^G \subseteq G^2$ be a relation such that
$t^G((l_0, \dots, l_m), (l'_0, \dots, l'_m))$ iff there exists a joint
action $(a_0, \dots, a_m) \in ACT$ such that for all $i$ we have $a_i
\in P_i(l_i)$ and $t_i(l_i, (a_0, \dots, a_m), l_i')$.  

\begin{definition}\label{def:ibis}
Given an \isrl{} 
$IS=\ISe{\lambda}$ over a set of agents $A=\{0, \dots, m\}$, \emph{the model of the $IS$} is a tuple
$M=\Me{}$, where
\begin{itemize}
\setlength\itemsep{0.3em}
\item $S\subseteq G^+$ is the set of global states, i.e.,
  non-empty sequences $g_0\dots g_k$ such that $g_0 =(l^0_0, \dots, l^0_m)$ and for each $i < k$ we have $t^G(g_i, g_{i+1})$,
\item $s_0 = g_0 = (l^0_0, \dots, l^0_m)$ is the initial state of the system,
\item $t \subseteq S^2$ is the global transition relation such that $t(g_0\dots g_k, g'_0 \dots g_l')$ iff $l=k+1$ and for all $i \leq k$ we have $g_i = g_i'$,
\item $\sim_i \subseteq S^2$ is the equivalence relation such that $g_0\dots g_k \sim_i g'_0 \dots g_l'$ iff $g_k=(l_0, \dots, l_m)$, $g'_l=(l'_0, \dots, l'_m)$ and $l_i = l'_i$, and
\item $\lambda$ is the labelling function.
\end{itemize}
\end{definition}

Intuitively, $S$ denotes the set of global configurations of the \isrl{}
equipped with information about all their predecessors. This is the
standard construction used for defining unravelling
in temporal logic (see, e.g., Definition 4.51 in \cite{Blackburn+01a}). We need to keep the information regarding the predecessors for the semantics of backward modalities; the semantics of the epistemic modalities is defined only on the current state.

Given a model $M$, an \emph{interval} in $M$ is a finite path on $M$,
i.e., a sequence of states $I=s_1, s_2, \dots, s_n$ such that $t(s_i,
s_{i+1})$, for $1 \leq i\leq (n-1)$. A \emph{point interval} is an interval
that consists of exactly one state.  We assume $pi(I)=\top$ for a point
interval $I$ and $pi(I)=\bot$ for all the other intervals.

For each state of $s= g_0, \dots, g_k \in S$, we assume $\g(s)=g_k$.  So
$\g(s)$ denotes the actual states of $s$, not its history.  We extend
$\g$ to intervals by assuming $\g(I)=\g(s_0)\dots \g(s_k)$ for every
interval $I=s_0, \dots, s_k$.

We say that an \isrl{} is \emph{point-based} if $\lambda$ only labels the
point intervals, i.e., for each $v \in \Var{}$ we have
\(\lambda(v) = \sum_{s \in S'} s\) for some $S' \subseteq S$. 
An \isrl{} is \emph{endpoint-based} if $\lambda$ is defined on the endpoints of the intervals, i.e., for
each $v \in \Var{}$ we have
\(\lambda(v) = \sum_{s \in S'} (s + sS^*s) + \sum_{(s, s') \in P} sS^*s'\)
for some $S' \subseteq S$, $P \subseteq S^2 \setminus \set{(s,s) \mid s \in S}$.
Notice that the models of the point-based \isrl{} can be seen as
standard Kripke structures; the models of the endpoint-based \isrl{}
can be seen as the generalised Kripke structures of
\cite{LomuscioMichaliszyn13}.

For $g=(l_0, l_1, \dots, l_m)$ we denote by $l_i(g)$ the local state
$l_i \in L_i$ of agent $i \in A$ in $g$. For a global state $s=g_0,
\dots, g_k$, we assume $l_i(s)=l_i(g_k)$.

Now we give an example of an interpreted system and of its model.  We
will use this example in the following sections to illustrate other
constructions.

\begin{example}\label{e:re1}
  Consider an
\isrl{}
$IS_{ex} = \ISe{\lambda})$
over a set of agents $A=\set{0,1}$ and a set of propositional variables $Var=\set{p}$, where 
\begin{itemize}
\setlength\itemsep{0.3em}
\item $L_0=\set{l_0}$, $L_1=\set{l_1, l_2, l_3}$,
\item $l^0_0=l_0$, $l^0_1=l_1$,
\item $ACT_0=\set{a_1, a_2}$, $ACT_1=\set{\epsilon}$,
\item $P_0(l_0)=ACT_0$, $P_1(l_1)=P_1(l_2)=P_1(l_3)=ACT_1$,
\item $t_0=\set{(l_0, (a_1,\epsilon), l_0),(l_0, (a_2,\epsilon), l_0)}$, 
 $t_1=\set{(l_1, (a_1,\epsilon), l_2)$, 
$(l_1, (a_2,\epsilon), l_2)$, \\
$(l_2, (a_2,\epsilon), l_3)$, 
$(l_2, (a_1,\epsilon), l_1)$, 
$(l_3, (a_1,\epsilon), l_1)$, 
$(l_3, (a_2,\epsilon), l_1)}$, 
\item $\lambda(p) = g_1 (g_1+g_2)^* g_3$, where $g_i=(l_0, l_i)$.
\end{itemize}
Figure \ref{f:exagents} depicts the agents of $IS$.
We have $G=\set{g_1,g_2,g_3}$
and 
$t^G = \set{ 
((l_0, l_1),\allowbreak(l_0, l_2)),\allowbreak
((l_0, l_2),\allowbreak(l_0, l_3)),\allowbreak
((l_0, l_2),\allowbreak(l_0, l_1)),\allowbreak
((l_0, l_3),(l_0, l_1))
}$.
The model $M_{ex}$ of $IS_{ex}$ is infinite. Its fragment is depicted
in Figure \ref{f:exmodel}. 
\end{example}

\begin{figure}[!t]
\centering \begin{tikzpicture}[shorten >=1pt,->, scale=1]
  \tikzstyle{vertex}=[circle,draw=black,minimum size=12pt,inner sep=0pt]

\begin{scope}[shift={(1,0)}]
  \foreach \var/\x/\y/\name in {l_1/0/1/s1, l_2/2/1/s2, l_3/4/1/s3}
    \node[vertex] (G-\name) at (\x,\y) {$\var$};

  \draw (G-s1) to node[below] {$(*, \epsilon)$} (G-s2);    
  \draw (G-s2) to node[below] {$(a_2, \epsilon)$} (G-s3);    
    
  \draw[out=165,in=8,->] (G-s3) to node[above] {$(*, \epsilon)$}  (G-s1);
  \draw[out=22,in=158,<-] (G-s1)  to node[above] {$(a_1, \epsilon)$} (G-s2);
\end{scope}

\begin{scope}[shift={(-2,0.0)}]
  \foreach \var/\x/\y/\name in {l_0/0/1/s1}
    \node[vertex] (G-\name) at (\x,\y) {$\var$};
  
  \draw[loop right] (G-s1) to node[right] {$(*, \epsilon)$}  (G-s1);    
\end{scope}

\end{tikzpicture}
\caption{The agents from Example \ref{e:re1}, where $*$ stands for any action.}\label{f:exagents}
\end{figure}
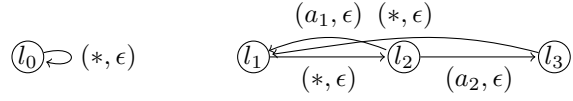

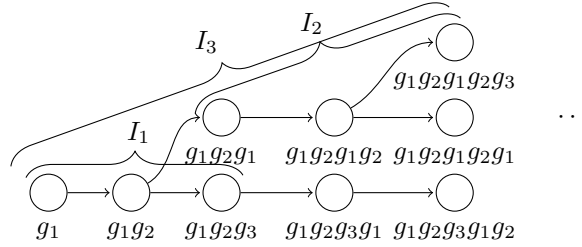
\begin{figure}[!t]
\centering
\usetikzlibrary{decorations.pathreplacing}
\begin{tikzpicture}[shorten >=1pt,->, scale=1]
  \tikzstyle{vertex}=[circle,draw=black,minimum size=14pt,inner sep=0pt]

\begin{scope}[shift={(0,0)}]
  \foreach \var/\x/\y/\name in {g_1/2/2/12, g_1g_2/3.1/2/22,
g_1g_2g_3/4.3/2/32,
g_1g_2g_3g_1/5.8/2/42,
g_1g_2g_3g_1g_2/7.4/2/52,
g_1g_2g_1/4.3/3.0/33,
g_1g_2g_1g_2/5.8/3.0/43,
g_1g_2g_1g_2g_1/7.4/3.0/53, 
g_1g_2g_1g_2g_3/7.4/4.0/54}, 
    \node[vertex,label={[label distance=-1.0cm]90:$\var$}] (G-\name) at (\x,\y) {~};

  \foreach \from/\to in
{12/22,22/32,32/42,42/52,33/43,43/53}
    \draw (G-\from) -- (G-\to);
    
  \draw[out=30,in=180,->] (G-43) to (G-54);
  \draw[out=30,in=180,->] (G-22) to (G-33);

\coordinate [label=center:$\cdots$] (A) at (9, 3);

\draw[decorate,decoration={brace,amplitude=10},-] (1.7,2.2) -- (4.6,2.2) ;
\node (X) at (3.2,2.8) {$I_1$};

\draw[decorate,decoration={brace,amplitude=10},-] (3.95,3.051) -- (7.5,4.30) ;
\node (X) at (5.5,4.25) {$I_2$};

\draw[decorate,decoration={brace,amplitude=10},-] (1.5,2.35) -- (7.35,4.45) ;
\node (X) at (4.1,4.0) {$I_3$};

\end{scope}	

\end{tikzpicture}
\caption{A fragment of the model of $IS_{ex}$ from Example \ref{e:re1}. $I_1$, $I_2$ and $I_3$ are labelled by $p$, as $\g(I_1)=\g(I_2)=g_1g_2g_3$ and $\g(I_3)=g_1g_2g_1g_2g_3$ belong to $\Lang{\lambda(p)}$. }\label{f:exmodel}
\end{figure}

\section{The logic \ehs{}}
\begin{figure}[!t]
\begin{tikzpicture}[scale=1]
\draw[|-|,semithick] (0,6.3) -- (2,6.3);
\draw[|-|,semithick] (1.97,6) -- (3.1,6);
\draw[|-|,semithick] (0,5.6) -- (1.3,5.6);
\draw[|-|,semithick] (0.3,5.2) -- (1.7,5.2);
\draw[|-|,semithick] (0.6,4.8) -- (2,4.8);
\draw[|-|,semithick] (2.5,4.4) -- (3.1,4.4);
\draw[|-|,semithick] (1,4) -- (3.1,4);

\draw[-,dotted] (0.015,6.3) -- (0.015,3.7);
\draw[-,dotted] (1.985,6.3) -- (1.985,3.7);

\node[text width=8.5cm] (note2) at (-4.2,6) {$I R_A I'$ iff $first(I')=last(I)$};
\node[text width=8.5cm] (note2) at (-4.2,5.6) {$I R_B I'$ iff $I=I'I_1$ 
 for some interval $I_1$ };
\node[text width=8.5cm] (note2) at (-4.2,5.2) {$I R_D I'$
 iff $I = I_1I' I_2$
 for some intervals $I_1,I_2$};
\node[text width=8.5cm] (note2) at (-4.2,4.8) {$I R_E I'$ iff $I=I_1I'$
 for some interval $I_1$ };
\node[text width=8.5cm] (note2) at (-4.2,4.4) {$I R_L I'$ iff there is a path from
  $last(I)$ to $first(I')$};
\node[text width=8.5cm] (note2) at (-4.2,4) {$I R_O I'$ iff $II_1=I_2I'$
 for some intervals $I_1, I_2$};
\end{tikzpicture}
\caption{Basic Allen relations.}\label{f:allens}
\end{figure}
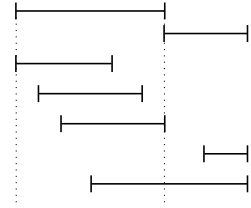

We now define the syntax of the specification language we focus on in
this paper. The temporal operators represent 
relations between intervals as originally defined by Allen
\cite{Al83}.  Six of these relations are presented in
Figure~\ref{f:allens}: $R_A$ (``{\bf a}fter'' or
``meets''), $R_B$ (``{\bf b}egins'' or ``starts''), $R_D$ (``{\bf
  d}uring''), $R_E$ (``{\bf e}nds''), $R_L$ (``{\bf l}ater''), and
$R_O$ (''{\bf o}verlaps''). Six additional operators can be defined
corresponding to the six inverse relations. Formally, for each $X
\in \{A,B,D,E,L,O\}$, we also consider the relation $R_{\bar X}$,
corresponding to ${R_X}^{-1}$. 

For convenience, we also consider the ``next'' relation $R_N$ such that $I R_N
I'$ iff $t(last(I), first(I'))$ \cite{LomuscioMichaliszyn14c}.  Let $\mathbb{HS} =\set{A,\aei{}, B, \bei{}, D, \dei{}, E, \eei{},
  L, \lei{}, N, \nei{}, O, \oei{}}$.


\begin{definition}\label{def:ehs} The syntax of the Epistemic
  Halpern--Shoham Logic (\ehs{}), $\Lehs{}$ is defined by the following BNF.
\[\begin{array}{rcl}
\varphi &::=& pi \mid p \mid  \neg \varphi \mid  \varphi \wedge \varphi \mid  K_i \varphi \mid  C_\Gamma \varphi \mid \langle X \rangle \varphi 
\end{array}
\]
where $p\in \Var{}$ is a propositional variable, $i\in A$ is an agent,  $\Gamma\subseteq A$ is a set of agents, and $X \in \mathbb{HS}$. 
\end{definition}

We use 
abbreviations including
$[X]\varphi$ for $\neg \naw{X} \neg \varphi$ and the usual Boolean
connectives $\vee$, $\Rightarrow$, $\Leftrightarrow$ as well as the
constants $\top, \bot$ in the standard way. 

Note that the modality $\naw{N}$ is a counterpart of the $EX$ operator
of CTL.  While $\naw{N}$ is redundant in \ehs{} since $\naw{N} \varphi
= \naw{A} (\neg pi \wedge \naw{B}\naw{B}\bot \wedge \naw{A} \varphi)$,
it is useful in fragments of \ehs{} that do not contain $B$ and $E$.

In order to provide the semantics for the epistemic operators on an
interval based semantics, we specify when two intervals are
epistemically indistinguishable for an agent, i.e., an agent cannot
distinguish between the two. We say that $I\sim_i I'$, where $I=s_1,
\dots, s_k$, $I'=s'_1, \dots, s'_l$, iff $k=l$ and for all $j\leq k$ we
have $s_j \sim_i {s'}_j$. In other words, for two intervals to be
indistinguishable to agent $i$ the two intervals need to be of the
same length and the agent cannot be able to distinguish any
corresponding point in the interval. This appears the natural
generalisation to intervals of the point-based knowledge modalities
traditionally used in epistemic logic~\cite{Fagin+95b}.
 For example, in the model presented in Example \ref{e:re1}, we have $I \sim_0
I'$ if and only if $|I|=|I'|$ and $I\sim_1 I'$ if and only if $I=I'$;
in general these relations may be more complicated.
We extend this definition to the common knowledge case by considering
$\sim_\Gamma = (\bigcup_{i \in \Gamma} \sim_i)^*$, for any group of agents
$\Gamma\subset A$,
where ${}^*$ denotes the transitive closure.
 For further
explanations we refer
to~\cite{LomuscioMichaliszyn13}.


We now define when a formula is satisfied in an interval on an \isrl{}.

\begin{definition}[Satisfaction]\label{def:satisfaction}
Given an \ehs{} formula $\varphi$, an \isrl{} $IS$, its model $M=\Me{}$ and an interval
$I$, we inductively define whether $\varphi$ holds in the interval
$I$, denoted $M, I \models \varphi$, as follows:
\begin{enumerate}[label=(\roman*)]
\setlength\itemsep{0.3em}
\item $M, I  \models pi$ iff $I$ is a point interval,
\item\label{i:proposition} $M, I  \models p$ iff $\g(I) \in \Lang{\lambda(p)}$,
\item $M, I  \models \neg \varphi$ iff it is not the case that $M, I
  \models \varphi$,
\item $M, I \models \varphi_1 \wedge \varphi_2$ iff $M, I \models
  \varphi_1$ and $M, I \models \varphi_2$,
\item $M, I \models K_i \varphi$, where $i \in A$, iff for all
  $I{}' \sim_i I $ we have $M, I{}' \models \varphi$,
\item $M, I \models C_\Gamma \varphi$, where $\Gamma\subseteq A$, iff for all $I{}' \sim_\Gamma I $ we have $M,
  I' \models \varphi$,
\item $M, I \models \naw{X} \varphi$ iff there exists an interval
  $I{}'$ such that $I R_X I{}'$ and $M, I{}' \models \varphi$,
  where $R_X$ is an Allen relation as above.
\end{enumerate}
\end{definition}

We write $IS, I \models \varphi$ if $M, I \models \varphi$, where
$M$ is the model of $IS$, and $IS \models \varphi$ if $IS, s_0 \models \varphi$.

\vspace{-1em}
\section{Expressive power} \label{sec:examples}
\vspace{-1em}
The
expressivity of \ehs{} is incomparable to that of traditional
formalisms such as LTL, CTL, or EHS as \ehs{} is defined on different semantics structures.  To investigate its
expressive power, we introduce \ehsre{}, a variant of \ehs{}
defined over point-based interpreted systems.
We show that the model checking problems for \ehsre{} and \ehs{} admit
a polynomial time reduction to one another on the corresponding
semantics. We also observe that \ehsre{} can represent properties not
expressible by \ctlks{}, the epistemic version of CTL$^*$ (and therefore LTLK and CTLK). So,
intuitively, there is a sense in which \ehs{} is indeed more
expressive than the usual temporal-epistemic logic interpreted on points.

For a labelling function $\lambda$ and a regular expression $r$, let $\lambda \circ r$ be the regular expression obtained from $r$ by replacing each propositional variable $p$ by $\sum_{g \in \lambda(p)} g$ (if $\lambda(p)=\emptyset$, we put $\emptyset$).

\begin{definition}\label{def:ehsre} The language of \ehsre{}, $\Lehsre{}$, is defined as follows:
\[\begin{array}{rcl}
\varphi &::=& pi \mid r \mid \neg \varphi \mid \varphi \wedge \varphi
\mid K_i \varphi \mid C_\Gamma \varphi \mid \langle X \rangle \varphi
\end{array}
\]
where $r\in RE_{2^\mathit{Var}}$, $i\in A$, $\Gamma\subseteq A$, and
$X \in \mathbb{HS}$.

The semantics of \ehsre{} results from replacing the second rule in Definition \ref{def:satisfaction} by (ii') $M, I \models r$ iff $I=s_1, \dots, s_k$ and
  $\g(s_1) \dots \g(s_k) \in \Lang{\lambda \circ r}$.
\end{definition}

Intuitively, \ehsre{} is the result of adapting \ehs{} by moving the
regular expressions from the labelling function into the language.

For convenience, we allow to use $p$ and $\neg p$ in the regular
expressions, by defining
\(p=\sum_{X \subseteq \Var, p \in X} X\)
and
\(\neg p=\sum_{X \subseteq \Var, p \not \in X} X\).

Let $\mathbb{L}_\mathit{Var}$ be the set of all the possible
labellings of interpreted systems with variables of $\mathit{Var}$,
and $\mathbb{L}^{pi}_\mathit{Var} \subset \mathbb{L}_\mathit{Var}$ be
the set of all such labellings for point-based interpreted systems.

\begin{theorem}\label{t:reductions}
There exist polynomial time computable functions 
$f: \mathbb{L}_\mathit{Var}  \times \Lehs \to \mathbb{L}^{pi}_\mathit{Var} \times \Lehsre$ and 
$f': \mathbb{L}^{pi}_\mathit{Var} \times \Lehsre \to \mathbb{L}_\mathit{Var}  \times \Lehs $ 
such that for any interpreted system $IS=\ISe{L}$, any formula $\varphi$ and any interval $I$:
\begin{enumerate} 
\item If $IS, I \models \varphi$ and $f(L, \varphi)=(L', \varphi')$, then 
$IS'=\ISe{L'}$ is point-based and such that $IS', I \models \varphi'$.
\item If IS is point-based, $IS, I \models \varphi$, and $f'(L, \varphi)=(L', \varphi')$, then we have that
$\ISe{L'}, I \models \varphi'$.
\end{enumerate}
\end{theorem}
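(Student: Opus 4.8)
The plan is to define $f$ and $f'$ as homomorphic rewritings that leave the transition structure untouched and merely relocate the regular expressions between the labelling and the atoms, and then to establish the stronger biconditional $M,I\models\psi \Leftrightarrow M',I\models\psi'$ for every subformula $\psi$ by induction on $\psi$. Because each map replaces only the propositional atoms and preserves all Boolean connectives, modalities $\naw{X}$ and epistemic operators $K_i,C_\Gamma$, this biconditional yields at once both stated implications (and, applied to $\neg\varphi$ as well, the converses one needs for a genuine reduction). The key observation is that in both directions $IS$ and $IS'$ share the components $\set{L_i,l^0_i,ACT_i,P_i,t_i}_{i\in A}$ and differ only in $\lambda$; hence their models have the same $S$, $s_0$, $t$ and $\sim_i$, and therefore the same intervals, the same relations $R_X$, the same $\sim_\Gamma$, and the same predicate $pi$. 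Every inductive case except the atomic one is then immediate, since the two satisfaction relations are defined by literally the same clause over identical data. All the content sits in the base case for propositions.

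For $f$, mapping a general \isrl{} to a point-based one with an \ehsre{} formula, the plan is to name configurations. For each global configuration $g$ occurring as a letter in some $\lambda(p)$ I would introduce a fresh variable $x_g$ and set $\lambda'(x_g)=g$; this labelling is point-based. I would then rewrite each \ehs{} proposition $p$ into the \ehsre{} atom $r_p$ obtained from $\lambda(p)\in RE_G$ by replacing every letter $g$ with the singleton $\set{x_g}\in 2^{\Var}$, and put $\varphi'=\varphi[p\mapsto r_p]$. The base case reduces to $\Lang{\lambda'\circ r_p}=\Lang{\lambda(p)}$: under $\lambda'$ the configuration $g$ is the unique one whose true-variable set is exactly $\set{x_g}$, so $\lambda'\circ\set{x_g}=g$ and the substitution is inverted. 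Only configurations already present in the input are named, so $f$ runs in polynomial time.

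For $f'$, mapping a point-based \isrl{} with an \ehsre{} formula to a general \isrl{}, the plan is the mirror image: for each atom $r$ occurring in $\varphi$ I would take a fresh variable $q_r$ and set $\lambda'(q_r)=\lambda\circ r$, which is a legal \ehs{} labelling since \ehs{} allows arbitrary regular expressions over $G$; then $\varphi'=\varphi[r\mapsto q_r]$. The base case is now read off directly from clauses (ii$'$) and (ii): $M,I\models r$ iff $\g(I)\in\Lang{\lambda\circ r}=\Lang{\lambda'(q_r)}$ iff $M',I\models q_r$.

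The hard part will be the polynomiality of $f'$, that is, bounding the size of $\lambda\circ r$. Positive atoms are harmless, as $\lambda\circ p$ simply copies the set of configurations labelled by $p$. The delicate case is a negated or empty atom such as $\neg p$ or $\emptyset$ (as in the motivating expression $S{\neg E}^*E$): here $\lambda\circ(\neg p)$ must enumerate every configuration at which $p$ is false, i.e. the complement of the labelled set inside the global alphabet $G$, which is a priori exponential. I would handle this by working relative to the configurations actually occurring in the finite interpreted system rather than the full product $G$, so that each complement is taken inside a polynomially bounded set; arguing this bound, and thereby reconciling the exponential alphabet $G$ with the succinct regular-expression representation, is the step requiring the most care. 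A minor formality common to both directions is that the rewriting adds polynomially many fresh variables, so $\mathbb{L}_\Var$ is to be read over a sufficiently large variable universe.
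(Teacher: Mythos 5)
Your construction coincides with the paper's own appendix sketch: $f$ introduces a fresh variable for each global configuration, labels it point-wise, and substitutes the (letter-renamed) expression $\lambda(p)$ for each atom $p$, while $f'$ replaces each regular-expression atom $r$ by a fresh variable $q^r$ with labelling $\lambda' \circ r$, exactly as in the paper, which leaves the semantic equivalence as ``readily verifiable'' where you spell out the structure-preserving induction (the models share $S$, $s_0$, $t$ and $\sim_i$, so only the atomic case matters). Your closing concern about the size of $\lambda \circ r$ for complemented atoms such as $\neg p$ is a genuine subtlety that the paper's two-line sketch silently glosses over, so your proposal is, if anything, more careful than the original while following the same route.
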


Given Theorem \ref{t:reductions}, we can say that the logics \ehs{}
and \ehsre{} can describe the same properties of corresponding
interpreted systems.  Since \ehsre{} expresses properties of
point-based interpreted systems, whose models are standard Kripke
structures, we can formally compare the expressive power of \ehsre{}
to that of some more widely known formalisms.

\begin{definition}
Given two logics $\mathcal{L}_1, \mathcal{L}_2$, we write
$\mathcal{L}_1 \subseteq \mathcal{L}_2$ if for each formula
$\varphi_1$ of $\mathcal{L}_1$ there is a formula $\varphi_2$ of
$\mathcal{L}_2$ such that for all point-based \isrl{} we have $IS
\models \varphi_1$ iff $IS \models \varphi_2$.
\end{definition}

One can easily show that \ehsre{} $\not \subseteq$ \ctlks{}.  Consider
the temporal property ``all the paths starting in the initial state
satisfy $(p;True)^\omega$''.  This property cannot be expressed in
\ctlks{}~\cite{Wolper83a}.  However, the property can be verified by
evaluating the \ehsre{} formula
\(p \wedge [A]( (p;\top)^* \Rightarrow [N] (p;\top^*)\).

Also observe that the property above cannot be expressed in the logic
EHS considered over point-based \isrl{} either. So
over point-based \isrl{} we have that  \ehsre{} $\not \subseteq$ EHS

In terms of limitations, note that \ehsre{} can only express
properties of finite intervals.  For example, the CTL property $AFp$
expressing the fact that each infinite path satisfies $p$ at some
point cannot be encoded by any \ehsre{} formula. Therefore CTLK $\not
\subseteq$ \ehsre; similarly we have LTLK $\not \subseteq$ \ehsre.

Since \ehsre{} does not allow us to name actions explicitly, we have
that E-PDL $\not \subseteq$ \ehsre{}.  It can also be shown that
\ehsre{} $\not \subseteq$ E-PDL, since E-PDL cannot express the property
\( \naw{A} (K_1 (pq^*r))\)
as the epistemic modalities in E-PDF is based on states rather than
time-intervals.

\vspace{-1em}
\section{The model checking problem}
\label{sec:decidability}
\vspace{-1em}
We now investigate the complexity of the model checking problem for
fragments of the logics explored so far.

\begin{definition}
Given a formula $\varphi$ of a logic $L$, an \isrl{} $IS$ and an interval $I$, \emph{the model checking problem for
  $L$} amounts to checking whether or not $IS, I \models \varphi$.
\end{definition}

In establishing the above, we say we have model checked $M$ against
the specification $\varphi$ at an interval $I$. Notice that the formula is verified only in the given interval; however, one can easily check whether \emph{all} the initial intervals satisfy a formula $\varphi$ by checking whether  $M, s_0 \models [A] \varphi$. 

The $A\bei{}LN$ fragment of \ehs{}, denoted as \ehsr{}, is
the subset of \ehs{} where the BNF is restricted to the only
modalities $K_i$, $C_\Gamma$, $\naw{A}$, $\naw{\bei}$, $\naw{L}$ and
$\naw{N}$.
Similarly, the $BDE$ fragment of \ehs{}, denoted as \ehsd{}, is the
restriction of \ehs{} to the modalities $K_i$, $C_\Gamma$, $\naw{B}$,
$\naw{D}$ and $\naw{E}$.

\begin{theorem}\label{t:bde}
Model checking \isrl{} against \ehsd{} specifications is decidable
and \PSPACE{}-complete.
\end{theorem}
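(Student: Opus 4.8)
The plan is to prove the two directions separately, with essentially all the work in the \PSPACE{} upper bound.

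\textbf{Hardness.} For \PSPACE{}-hardness I would simply observe that the purely epistemic fragment EIT, which uses only the modalities $K_i$ and $C_\Gamma$ and is already known to be \PSPACE{}-hard, is syntactically contained in \ehsd{}. Since the generalised Kripke structures on which EIT is shown hard are exactly the models of endpoint-based \isrl{} (as noted after Definition~\ref{def:ibis}), the witnessing instances are \isrl{} instances and the formulas used are \ehsd{} formulas interpreted in the same way. Hence model checking \isrl{} against \ehsd{} is \PSPACE{}-hard, and it remains only to establish membership.

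\textbf{A history-independence lemma.} The first step towards membership is to show that, for $\varphi \in \ehsd{}$, whether $M, I \models \varphi$ depends only on the configuration word $\g(I)$, not on the history stored in the states of $I$. This follows by induction on $\varphi$: propositions are evaluated through $\g(I)$ by rule~\ref{i:proposition}; the operators $\naw{B}, \naw{D}, \naw{E}$ reach only subintervals, whose configuration words are contiguous subwords of $\g(I)$; and for $K_i$ (resp.\ $C_\Gamma$) the set of $\sim_i$-related (resp.\ $\sim_\Gamma$-related) intervals is determined, up to history, by $\g(I)$ together with the agents' local states at each position, each step of a $\sim_\Gamma$-chain moreover preserving length. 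The crucial consequence is that every interval inspected while evaluating $\varphi$ on the input interval $I$ has a configuration word of length at most $|\g(I)|$, which is polynomial in the input. Thus each relevant interval can be represented by a polynomial-size word over $G$, even though $G=L_0\times\dots\times L_m$ is exponential in the number of agents and the model $M$ is infinite.

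\textbf{Membership.} I would then give a recursive procedure operating on configuration words and argue that it runs in alternating polynomial time, hence in \PSPACE{}. Propositions reduce to membership of a polynomial-length word in the language of a regular expression, solvable in polynomial time by simulating the associated NFA. The operators $\naw{B}, \naw{D}, \naw{E}$ branch existentially over the polynomially many prefixes, suffixes, and infixes of the current word. The operator $K_i$ branches universally over all configuration words $w'$ of the same length and the same agent-$i$ local states that describe a genuine interval of $M$; the latter condition amounts to $w'$ being a $t^G$-path (the edges of $t^G$ being polynomial-time checkable) whose first configuration is reachable from $s_0$, i.e.\ a reachability query in the implicit exponential configuration graph, itself in \NPSPACE{}$=\PSPACE{}$ by Savitch's theorem. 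The operator $C_\Gamma$ is treated analogously, the universal quantification ranging over all intervals $\sim_\Gamma$-reachable from the current one, which is again a (co-)reachability query over polynomial-length words. As the recursion depth is bounded by $|\varphi|$ and the space used at each level (the current word, plus the machinery for one reachability/validity check) is polynomial and reusable, the whole procedure runs in polynomial space.

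\textbf{Main obstacle.} The delicate point is not the recursion over the formula but keeping the configuration space under control: $G$ is exponential in the number of agents and the reachable part of $M$ is infinite, so the universal quantifications hidden in $K_i$ and, especially, in the transitive closure defining $C_\Gamma$ cannot be performed by materialising the relevant graphs. The argument must instead rely throughout on the polynomial length of the candidate intervals established by the history-independence lemma, together with the fact that reachability in the exponential configuration graph is in \PSPACE{}; verifying that the per-level and total space remain polynomial under nested epistemic modalities is where the proof needs the most care.
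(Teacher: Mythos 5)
Your proposal is correct and takes essentially the same route as the paper: the lower bound is inherited from the \PSPACE{}-hardness results of \cite{LomuscioMichaliszyn13} on endpoint-based structures, and the upper bound is an alternating polynomial-time (\APTIME{}$=$\PSPACE{}) evaluation procedure whose correctness rests on the fact that $\naw{B}$, $\naw{D}$, $\naw{E}$ only shrink the current interval while $\sim_i$ and $\sim_\Gamma$ preserve its length, so every interval examined is bounded by the size of the input interval. Your history-independence lemma and the explicit representation of candidate intervals as configuration words (with Savitch-style reachability checks in the implicit configuration graph) merely make precise what the paper's Algorithm~\ref{a:modelcheckingbde} leaves implicit, namely how the universal and existential selections over the infinite unravelled model are effectively carried out in polynomial space.
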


The above follows from the fact that the satisfaction can be determined by examining only intervals of bounded length. The proof is in the appendix.

\begin{theorem}\label{t:abldec}
Model checking \isrl{} against \ehsr{} specifications is
decidable in non-elementary time.
\end{theorem}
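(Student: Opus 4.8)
The plan is to exploit the fact that \ehsr{} is a \emph{forward-oriented} fragment: none of its modalities $\naw{A}$, $\naw{\bei}$, $\naw{L}$, $\naw{N}$, nor the epistemic relations $\sim_i, \sim_\Gamma$, ever inspect a proper prefix of an interval's history, so the truth of a formula should depend only on the sequence of global configurations traversed. Concretely, the first step I would carry out is a \emph{history-irrelevance} lemma: for every \ehsr{} formula $\psi$ and all intervals $I, I'$ with $\g(I)=\g(I')$, we have $M, I \models \psi$ iff $M, I' \models \psi$. I would prove this by induction on $\psi$. The cases $pi$ and $p$ depend only on $\g(I)$ directly by Definition~\ref{def:satisfaction}; the inductive cases use that $\sim_i$ is determined by the local states $l_i(\g(\cdot))$, that $\naw{A}$, $\naw{N}$, $\naw{L}$ pass to a \emph{fresh} interval whose first configuration is, respectively, $\g(\last(I))$, a $t^G$-successor of it, or a $t^G$-reachable configuration, and that $\naw{\bei}$ extends $I$ forward, the available extensions being exactly the $t^G$-continuations of $\g(\last(I))$. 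This lemma reduces model checking on the infinite model $M$ to a question about finite words over the finite alphabet $G$.

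Next I would construct, by induction on $\psi$, a finite automaton $\mathcal{B}_\psi$ over $G$ accepting exactly the configuration sequences $\g(I)$ of the intervals satisfying $\psi$ (restricted to valid $t^G$-paths). For $p$ I take a deterministic automaton for $\Lang{\lambda(p)}$ intersected with the path constraint; for $pi$, the single-letter paths; Boolean connectives are handled by product and complementation. The case $\naw{\bei}\psi$ is handled by a co-reachability marking on the product of $\mathcal{B}_\psi$ with $(G, t^G)$, accepting $w$ iff some valid continuation $ww'$ is accepted. The cases $\naw{A}\psi$, $\naw{N}\psi$, $\naw{L}\psi$ reduce to precomputing the set of configurations $g$ from which a fresh $\psi$-interval exists (nonemptiness of $\mathcal{B}_\psi$ started at $g$, with the appropriate successor/reachability side condition), after which $\mathcal{B}_{\naw{A}\psi}$ need only test the last letter of $w$. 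The epistemic case $K_i\psi$ is the key one: I would build the automaton for $\neg\psi$, project it onto the alphabet $L_i$ via $g \mapsto l_i(g)$ to obtain a nondeterministic automaton recognising the agent-$i$ observation sequences realisable by some $\neg\psi$-interval, then determinise and complement it and pull the result back along the projection; $C_\Gamma$ is treated analogously using the transitive closure $\sim_\Gamma$ via a fixpoint/product construction over group observations, which remains finite-state. Model checking $IS, I \models \varphi$ is then decided by testing $\g(I) \in \Lang{\mathcal{B}_\varphi}$; in particular $IS \models \varphi$ reduces to whether $\mathcal{B}_\varphi$ accepts the one-letter word $g_0$.

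Since all these constructions are effective, the procedure terminates and the problem is decidable. For the complexity bound I would track the size of $\mathcal{B}_\psi$: the projection-and-complement step for each epistemic modality, and the complementation for each negation, may each incur a determinisation with an exponential blow-up, so along a chain of nested modalities the automaton size grows as a tower of exponentials of height proportional to the modal nesting depth of $\varphi$, yielding the claimed non-elementary upper bound. The main obstacle is precisely the epistemic case: handling the universal, equal-length, pointwise quantification of $K_i$ and $C_\Gamma$ on top of the unbounded-length intervals produced by $\naw{\bei}$, $\naw{A}$ and $\naw{L}$, and showing that the projection/determinisation/complementation pipeline (and the transitive-closure fixpoint for $C_\Gamma$) is sound while keeping the representation finite. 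This step is simultaneously the crux of correctness and the source of the non-elementary growth; by comparison, the history-irrelevance lemma, the co-reachability treatment of $\naw{\bei}$, and the side conditions distinguishing $\naw{A}$, $\naw{N}$ and $\naw{L}$ are the more routine points to verify.
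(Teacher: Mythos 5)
Your overall architecture is genuinely different from the paper's. The paper does not build automata for formulas at all: it defines a bounded satisfaction relation $\modelsb$ in which witnesses for $\naw{A}$, $\naw{\bei}$, $\naw{N}$ are restricted to length $|I|+f^{IS}(\varphi)$, and then proves $\models\,=\,\modelsb$ by a pumping argument over \emph{modal context trees} (Lemmas~\ref{l:thelemmaone}--\ref{l:thelemmathree}, Theorem~\ref{t:equivalence}); decidability then follows by brute-force enumeration of the finitely many bounded-length intervals. Your compositional construction of automata $\mathcal{B}_\psi$ over $G$ is sound for most of the fragment: your history-irrelevance lemma is essentially the paper's Lemma~\ref{l:thelemmatwopr} combined with Lemma~\ref{l:thelemmatwo} (with the minor fix that one should restrict $G$ to configurations $t^G$-reachable from $g_0$, since every interval of the model starts at a reachable configuration), the co-reachability treatment of $\naw{\bei}$ and the last-letter tests for $\naw{A}$, $\naw{N}$, $\naw{L}$ are fine, and your $K_i$ step is exactly the classical projection/determinisation/complementation construction for synchronous perfect-recall knowledge, which correctly yields the non-elementary tower.

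The genuine gap is the $C_\Gamma$ case, which you flag as the crux and then dispose of with the unsupported claim that a ``fixpoint/product construction \ldots remains finite-state.'' It does not, in general. The relation $\sim_\Gamma$ is the transitive closure of a union of length-preserving, letter-to-letter relations on valid $t^G$-paths, so computing $\{w \mid \exists w'.\ w \sim_\Gamma w' \wedge w' \in \Lang{\mathcal{B}_{\neg\psi}}\}$ is an instance of regular model checking: transitive closures of synchronous transducers do not preserve regularity, and the natural fixpoint iteration $X_{m+1}=X_m \cup \mathit{pre}(X_m)$ need not stabilise, because the number of $\sim_i$-alternations required grows with the word length (your $\mathcal{B}_\psi$ must handle words of \emph{all} lengths, since $\naw{A}$ and $\naw{\bei}$ produce unboundedly long intervals, e.g.\ in $C_\Gamma\naw{A}p$). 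Indeed, this exact configuration --- common knowledge as the closure of pointwise observation equivalences on synchronous runs --- is what makes model checking temporal-epistemic logic with common knowledge under perfect recall undecidable (van der Meyden and Shilov), so no effective uniform finite-state representation of the $\sim_\Gamma$-saturation can exist in general; only single-agent $K_i$ admits your subset-construction treatment. The paper sidesteps this entirely: once all interval lengths are bounded via the modal-context-tree pumping, the $\sim_\Gamma$-closure is computed by explicit enumeration over a finite set of words, with no regularity needed. To repair your proof you would have to either establish such a length bound \emph{before} handling $C_\Gamma$ (at which point you have essentially reconstructed the paper's argument), or prove that in this specific setting the closure preserves regularity --- which is precisely what is missing.
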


We prove this by generalising the proof of Theorem~13 given
in~\cite{LomuscioMichaliszyn14c}. 

A \emph{top-level} sub-formula of a formula $\varphi$ is a sub-formula of $\varphi$ of the form $X \varphi'$, for some modality $X$ of \ehsr{}, that is not in scope of any modality.
Assume an \isrl{} $IS$. Let $f^{IS}(\varphi)$ be defined recursively as 
   \[f^{IS}(\varphi) = (2|G|^2 \prod_{q \in \Var} 2^{\card{\lambda(q)}}) \cdot 2^{f^{IS}(\varphi_1)} \cdot \ldots \cdot 2^{f^{IS}(\varphi_k)}\] 
   where $X_1\varphi_1$, \dots, $X_k\varphi_k$ are the top-level sub-formulas of $\varphi$. The idea is that $f^{IS}(\varphi)$ is an upper bound on the number of  different \emph{interval types} w.r.t. $\varphi$; an interval type consists of an information whether an interval is a point interval or not (hence $2$), what are its endpoints (hence $|G|^2$), what are the states of the automata corresponding to the regular expressions after reading the interval (hence the product) and types of intervals related to the interval w.r.t. the top level sub-formulas of $\varphi$ (hence the recursive part). 

We define a bounded satisfaction relation $\modelsb$ for \ehsr{}, for which the decidability of the model checking is straightforward. The rules (i'-vi') of the definition of $\modelsb$ are the same as the rules (i-vi) from Definition \ref{def:satisfaction} except that $\models$ is replaced with $\modelsb$. The last rule, however, is different:
\begin{enumerate}
\item[(vii')] $M, I  \modelsb \naw{X} \varphi$ if and only if there exists an interval
  $I{}'$ such that $|I'| \leq |I|+f^{IS}(\varphi)$,  $I R_X I{}'$ and $M, I{}' \modelsb \varphi$,  where $X$ is $A$, $\bar{B}$, or $N$.
\end{enumerate}

It is not hard to see that model checking is decidable for the bounded semantics. It turns out that in the \ehsr{} case, the relations $\models$ and $\modelsb$ are the same, and therefore the model checking procedure for the bounded semantics solves the model checking problem for the unbounded semantics. All the details are in the appendix.

By employing the polynomial time reductions of Theorem
\ref{t:reductions}, we can show that model checking point-based \isrl{} against $BDE$ fragment of \ehsre{}
  specifications is \PSPACE{}-complete and that 
model checking point-based \isrl{} against $A\bei{}LN$ fragment of \ehsre{}
specifications
is decidable.

\vspace{-1em}
\section{Conclusions and Future Work}\label{sec:conclusions}
\vspace{-1em}

Temporal logic is one of the key foundational tools to reason about
computing systems. Several variants of temporal logics have been
studied, reflecting the underlying assumptions on the temporal flow,
ranging from linear to branching and from discrete to continuous.
Interval temporal logics~\cite{MO83,HS91} are a relatively less
explored variant of temporal logic. As is known, these are
particularly appropriate to study the properties of continuous
processes. However, while interval temporal logics could provide a
formal basis for systems verification, little is known in terms of
their model checking problem. Indeed, this was only recently explored
in~\cite{LomuscioMichaliszyn13,LomuscioMichaliszyn14c,Montanari+14a}
in the context of variants of the logic HS.

Since the complexity of the model checking problem for HS fragments is
typically high and the decidability of the full HS logic is not known,
a compelling avenue of research involves establishing whether the
expressivity of previously studied, well-behaved fragments of HS can
be significantly enriched without losing the attractiveness of their
model checking problem.
The logic~\ehs{}, proposed in this paper, combines the interval
temporal logic HS and epistemic logic.  The logic can be see as a
considerable generalisation of the logics proposed
in~\cite{LomuscioMichaliszyn13}
and~\cite{Montanari+14a}. Specifically, \ehs{} can express properties
of complex processes consisting of many stages, even if the processes
are repeating or overlapping.  Regular expressions allow to express
further properties not explored here. 


We showed that the model checking for the $BDE$ fragment of \ehs{} is
decidable and \PSPACE{}-complete, and that the model checking problem
for the $A\bei{}L$ fragment of the logic is decidable. While the
complexity is the same as that for the EHS logic
in~\cite{LomuscioMichaliszyn14c}, \ehs{} is considerably more
expressive.

Further ahead we intend to study more expressive fragments of \ehs{}.
We believe that the technique presented here can be extended to
backward modalities, such as $\naw{\aei}$, $\naw{\dei}$, $\naw{\eei}$,
$\naw{\lei}$ and $\naw{\nei}$.  However, more investigations are
required, since in the case of backward modalities one cannot simply
disregard the histories.

A further open problem is the decidability of any fragment involving
the modality $O$.  In a sense, $O$ is the hardest case of all
operators. Indeed, is known that the satisfiability for the $O$
fragment of HS is undecidable \cite{Bresolin+11a}.  Since $O$ can be
expressed using $\bar{B}$ and $E$ \cite{M11}, we cannot show the
decidability of the join of the fragments studied in this paper
($AB\bar{B}DELN$) without proving it for $O$.

Finally, we are interesting in implementing an efficient model
checking toolkit for \ehsre{} specifications. We intend to develop
more efficient algorithms on symbolic representations and a suitable
predicate abstraction technique for \ehsre{}.

\medskip

\noindent\textbf{Acknowledgments.}
The authors would like to thank Angelo Montanari whose comments on
\cite{LomuscioMichaliszyn14c} lead to the present investigation. \\
The second author was generously supported by Polish National Science Center based on the decision
number DEC-2011/03/N/ST6/00415. This research was supported by the EPSRC under grant EP/I00520X.

\vspace{-2em}

\bibliographystyle{splncs03}
\bibliography{all,bib}

\appendix

\section{Sketch of the Proof of Theorem \ref{t:reductions}}
Roughly speaking, functions $f$ and $f'$ just move the regular expressions from the labelling to the formula and the other way round.
Function $f$ is such that $f(\lambda, \varphi) = (\lambda', \varphi')$, where $\lambda'(g)=g$ for all the states $s$ and $\varphi'$ is the result of replacing each propositional variable $q$ in $\varphi$ by $\sum_{g \in \lambda(q)} g$.
Function $f'$ is such that $f'(\lambda', \varphi') = (\lambda, \varphi)$, where for each regular expression $r$ in $\varphi'$, we replace $r$ by an unique propositional variable $q^r$ and we put $\lambda(q^r) = \lambda' \circ r$. 
It is readily verifiable that both functions are as required.

\section{Proof of Theorem \ref{t:bde}}

\begin{proof}
The lower bound follows from the lower bound for the endpoint-based
variant of \isrl{} that was shown in \cite{LomuscioMichaliszyn13} for
the same syntax.

For the upper bound, we consider an alternating algorithm \cite{CKS81} working in
polynomial time.  Since \textsc{APTime}=\textsc{PSpace}, the theorem
follows.  Algorithm~\ref{a:modelcheckingbde} reports the procedure
\textsc{ver-BDE} that solves the model checking problem.  Its
complexity follows from the fact that each existentially or
universally selected interval has the size bounded by the size of the
initial interval.
\qed
\end{proof}

\begin{algorithm}
\caption{The model checking procedure for \ehsd{}.}
\label{a:modelcheckingbde}
\begin{algorithmic}[1]
\setlength{\itemsep}{0pt}
\Procedure{ver-BDE}{${M}$, $I$, $\varphi$}
 \If{$\varphi = p$}
   {\textbf{return {$g(I) \in \Lang{\lambda(p)}$}} }
 \EndIf
 \If{$\varphi = pi$}
   {\textbf{return {$pi(I)$} }}
 \EndIf
 \If{$\varphi = \neg \varphi'$}
   {\textbf{return} $\neg$\Call{ver-BDE}{${M}$, $I$, $\varphi'$}}
 \EndIf
 \If{$\varphi = \varphi_1 \land \varphi_2$}
    {\scalebox{0.92}[1.0]{\textbf{return} \Call{ver-BDE}{${M}$, $I$, $\varphi_1$}$\land$\Call{ver-BDE}{${M}$, $I$, $\varphi_2$}}}
 \EndIf
 \If{$\varphi = K_i \varphi'$ where $i \in A$}
  \State{universally select $J$ such that $J \sim_i I$}
  \State {\textbf{return} \Call{ver-BDE}{${M}$, $J$, $\varphi'$}}
 \EndIf
 \If{$\varphi = C_G \varphi'$ where $G \subseteq A$}
  \State{universally select $J$ such that  $J \sim_G I$}
   \State \textbf{return} \Call{ver-BDE}{${M}$, $J$, $\varphi'$}
 \EndIf
 \If{$\varphi = X \varphi'$ where $X \in \set{\naw{B}, \naw{D}, \naw{E}}$}
  \State{existentially select $J$ such that $I R_X J$}
   \State  {\textbf{return} \Call{ver-BDE}{${M}$, $J$, $\varphi'$}}
 \EndIf
\EndProcedure
\end{algorithmic}
\end{algorithm}

\section{Proof of Theorem \ref{t:abldec}}
Observe that $\naw{L}$ can be defined in terms of $\naw{A}$: for any $\varphi$, $\naw{L} \varphi \equiv \naw{A}(\neg
pi \wedge \naw{A} \varphi)$.  Given this, in what follows we assume
that the formulas do not contain $\naw{L}$. We now define some
auxiliary notions.

For convenience, for each modality $X$  of \ehsr{}, we define a relation $R_X$ as follows: $R_{\naw{A}}=R_A$,
$R_{\naw{\bei}}=R_{\bei}$, $R_{K_i} = \sim_i$ and $R_{C_G}=\sim_{G}$.

\begin{theorem}\label{t:boundeddec}
Model checking ISRL under bounded semantics against \ehsr{}
specifications is decidable.
\end{theorem}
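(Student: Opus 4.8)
The plan is to exploit the fact that, under the bounded semantics, every modal quantifier ranges over an effectively finite set of intervals, so that a straightforward recursive evaluation of $\modelsb$ terminates. First I would observe that the two epistemic modalities preserve interval length (since $I \sim_i I'$ and $I \sim_\Gamma I'$ both require $|I| = |I'|$), while rule (vii$'$) caps the length of any interval introduced by $\naw{A}$, $\naw{\bei}$ or $\naw{N}$ at $|I| + f^{IS}(\cdot)$. Unfolding these bounds through the syntax tree of $\varphi$ — which has finite modal depth — yields a single computable bound $N$ on the length of every interval inspected during the evaluation of $M, I \modelsb \varphi$ starting from the given interval $I$. Since $f^{IS}$ is well defined and finite (albeit non-elementary), $N$ is finite and computable, which is all that decidability requires.

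The remaining difficulty is that the model $M$ is infinite: its states carry full histories, so even after fixing the length there are infinitely many intervals $I'$ with $I' \sim_i I$, and the universal clauses (v)--(vi) quantify over all of them. The key lemma I would prove is that, because \ehsr{} contains no backward modalities, satisfaction under $\modelsb$ depends only on the configuration word $\g(I)$ and not on the history preceding $\first(I)$: formally, $\g(I) = \g(I')$ implies $M, I \modelsb \psi \iff M, I' \modelsb \psi$ for every \ehsr{} formula $\psi$. This is shown by induction on $\psi$. The propositional clause (ii) reads $\g(I) \in \Lang{\lambda(p)}$ and so depends only on $\g(I)$ (and is decidable, being membership of a finite word in a regular language); the Boolean and point-interval clauses are immediate; for the temporal clauses one checks that the $R_X$-successors of $I$ admissible under (vii$'$) have $\g$-images determined by $\g(I)$ together with the forward graph $(G, t^G)$; and for the epistemic clauses one checks that $\{\g(I') : I' \sim_i I\}$ (resp. $\sim_\Gamma$) is exactly the set of $t^G$-paths of length $|I|$ that agree with $\g(I)$ on agent $i$'s local state pointwise (resp. its $\sim_\Gamma$-closure) and begin in a reachable configuration — a set manifestly determined by $\g(I)$. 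Conversely, every such configuration word $w$ is realised as $\g(I')$ for an actual interval $I'$ by prefixing any history ending in $w_1$, so no eligible class is missed.

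With the lemma in hand I would replace, in both epistemic clauses, the infinite set of $\sim_i$- (resp. $\sim_\Gamma$-) related intervals by the finite, effectively enumerable set of configuration words just described, checking a single representative interval per word; reachability of a configuration and the transitive closure defining $\sim_\Gamma$ are both decidable on the finite graph $(G, t^G)$. The existential clauses are already finite: $\naw{\bei}$ extends $I$ by a forward path of length at most $f^{IS}(\cdot)$, while $\naw{A}$ and $\naw{N}$ enumerate forward paths from $\last(I)$ (resp. from its $t^G$-successors), all finitely many and computable. A recursive procedure implementing $\modelsb$ with these finite quantifier ranges therefore terminates — the formula strictly decreases at each call and every interval considered has length at most $N$ — and decides $M, I \modelsb \varphi$.

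I expect the main obstacle to be the epistemic case of the history-independence lemma: one must argue both that the $\g$-image of the $\sim_i$-class is fixed by $\g(I)$ and that every configuration word in that image with a reachable starting configuration is realised by a genuine interval of $M$, so that quantifying over configuration words neither loses nor invents witnesses. Some care is also needed to verify that the length bound $N$ obtained by unfolding (vii$'$) through nested modalities is genuinely uniform over all branches of the recursion.
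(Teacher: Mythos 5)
Your proposal is correct, and its skeleton coincides with the paper's: the paper proves this theorem in one line by pointing at the recursive procedure \textsc{Verify} (Algorithm~\ref{a:modelchecking}), whose temporal clauses are finitized exactly as you describe by the length cap $|J|\leq f^{IS}(\cdot)+|I|$ of rule (vii$'$), and asserting that it ``clearly'' terminates. Where you genuinely diverge is in treating the epistemic clauses seriously: as you observe, the model $M$ is an unravelling whose states carry histories, so the sets $\set{J \mid J\sim_i I}$ and $\set{J \mid J \sim_\Gamma I}$ over which the algorithm's \textbf{for all} loops range are infinite, and the paper's proof of this theorem does not address this at all. Your history-independence lemma --- that $\modelsb$-satisfaction of an \ehsr{} formula depends only on the configuration word $\g(I)$, so the epistemic quantifiers can be replaced by quantification over the finitely many $t^G$-paths of length $|I|$ with reachable start that agree pointwise with $\g(I)$ on agent $i$'s local state (closed transitively for $\sim_\Gamma$) --- is exactly the missing justification, and your realisability direction (every such word is witnessed by prefixing a history ending at its first configuration) is needed so that no equivalence class is lost. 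The paper does eventually prove a statement in this spirit, but only later and in a different currency: Lemma~\ref{l:thelemmatwopr} shows that \emph{modal context trees} are history-independent, and it is invoked for the equivalence of $\models$ and $\modelsb$ (Theorem~\ref{t:equivalence}), not for the termination of \textsc{Verify}. So your proof buys a self-contained and rigorous decidability argument where the paper relies on an unstated appeal to history-independence; conversely, the paper's MCT machinery is the heavier tool needed anyway for the pumping argument in Theorem~\ref{t:equivalence}, which your lemma alone would not replace. Your two flagged obstacles (the epistemic case of the lemma, and uniformity of the bound $N$ across branches) are the right ones, and both resolve as you anticipate.
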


\begin{algorithm}
\caption{The model checking procedure for \ehsr{}.}
\label{a:modelchecking}
\begin{algorithmic}[1]
\setlength{\itemsep}{0pt}
\Procedure{verify}{${M}$, $I$, $\varphi$}
 \If{$\varphi = p$}
    {\textbf{return {$I \in \Lang{\lambda(p)}$}} }
 \EndIf
 \If{$\varphi = pi$}
    {\textbf{return {$pi(I)$} }}
 \EndIf
 \If{$\varphi = \neg \varphi'$}
    {\textbf{return} $\neg$\Call{verify}{${M}$, $I$, $\varphi'$}}
 \EndIf
 \If{$\varphi = \varphi_1 \land \varphi_2$}
    {\textbf{return} \Call{verify}{${M}$, $I$, $\varphi_1$} $\land$ \Call{verify}{${M}$, $I$, $\varphi_2$}}
 \EndIf
 \If{$\varphi = K_i \varphi'$ where $i \in A$}
  \ForAll{$J$ s.t. $I \sim_i J$}
   \If{$\neg$\Call{verify}{${M}$, $J$, $\varphi'$}}
     {\textbf{return} false}
   \EndIf
  \EndFor
  \State{\textbf{return} true}
 \EndIf
 \If{$\varphi = C_G \varphi'$ where $G \subseteq A$}
  \ForAll{$J$ s.t. $I \sim_G J$}
   \If{$\neg$\Call{verify}{${M}$, $J$, $\varphi'$}}
     {\textbf{return} false}
   \EndIf
  \EndFor
  \State{\textbf{return} true}
 \EndIf
 \If{$\varphi = X \varphi'$ where $X \in \set{\naw{A}, \naw{\bei}}$}
  \ForAll{$J$ s.t. $I R_X J$ and $|J| \leq f(\varphi)+|I|$}
   \If{\Call{verify}{${M}$, $J$, $\varphi'$}}
     {\textbf{return} true}
   \EndIf
  \EndFor
  \State{\textbf{return} false}
 \EndIf
\EndProcedure
\end{algorithmic}
\end{algorithm}

\begin{proof}The procedure $\Call{Verify}$ given in Algorithm
\ref{a:modelchecking} solves the model checking problem.  Clearly, it
always terminates and its computation time is non-elementary.
\qed
\end{proof}

Our crucial theorem says that the bounded semantics is basically the same as the unbounded one.

\begin{theorem}\label{t:equivalence}
Given an \ehsr{} formula $\varphi$, a model $M$, and an interval $I$,
$M, I \models \varphi$ if and only if $M, I \modelsb \varphi$.
\end{theorem}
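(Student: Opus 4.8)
The plan is to prove the biconditional $M, I \models \varphi \iff M, I \modelsb \varphi$ by induction on $\varphi$, since negation prevents proving a single implication directly. The two semantics differ only in the clause for $\naw X$ with $X \in \set{A, \bei, N}$, so for the propositional, Boolean and epistemic cases the two relations coincide verbatim once the induction hypothesis is applied to the immediate subformulas (for $K_i$ and $C_\Gamma$ the quantified set of $\sim_i$- and $\sim_\Gamma$-related intervals is identical under both semantics). For the modal case $\varphi = \naw X \psi$ the direction $\modelsb \Rightarrow \models$ is immediate, because a bounded witness is in particular an unbounded one. The entire content therefore lies in producing, from an arbitrary witness, a short one, which I isolate as a Short Witness Lemma: if $I R_X I'$ and $M, I' \modelsb \psi$, then some $I''$ satisfies $I R_X I''$, $\card{I''} \le \card{I} + f^{IS}(\psi)$ and $M, I'' \modelsb \psi$.

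To prove this lemma I would work entirely with the unbounded relation, using the induction hypothesis on $\psi$ only to pass from $\modelsb$ to $\models$ and back at the two ends. The key device is the notion of interval type sketched before the statement: for $\psi$ with top-level subformulas $X_1\psi_1, \dots, X_k\psi_k$, the type $\tau_\psi(I')$ records whether $I'$ is a point interval, its endpoints $\g(\first(I'))$ and $\g(\last(I'))$, the subset-construction state reached by the automaton of each $\lambda(q)$ after reading $\g(I')$, and, for each $j$, the set $\set{\tau_{\psi_j}(J') : I' R_{X_j} J'}$ of types of the $R_{X_j}$-related intervals. A routine induction then gives the Counting Lemma (the number of $\psi$-types is at most $f^{IS}(\psi)$, matching the recurrence exactly) and the Invariance Lemma (if $\tau_\psi(I'_1) = \tau_\psi(I'_2)$ then $M, I'_1 \models \psi \iff M, I'_2 \models \psi$); the latter is proved by structural induction, reading off propositions from the automaton component and each maximal modal subformula from its successor-type component.

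The shortening itself is a pumping argument. I would show that $\tau_\psi$ is computed by a deterministic left-to-right pass: there is a running type $\rho_m$ after reading the prefix of length $m$, with $\rho_m$ a function of $\rho_{m-1}$ and the next configuration only, and with $\rho_n$ determining $\tau_\psi(I')$. The point-interval flag, the current endpoint and the automaton states are obviously of this form; the successor-type component for an $A$- or $N$-modality depends only on the last configuration, and for a $\bei$-modality only on the last configuration together with the automaton states, so all of these are already carried by $\rho_m$, and the running types number at most $f^{IS}(\psi)$. Given this, whenever $X = \bei$ and $\card{I'} > \card{I} + f^{IS}(\psi)$, the positions $\card{I}, \dots, \card{I'}-1$ outnumber the running types, so two of them $a < b$ carry the same $\rho$; since equal running types have equal current configuration, $g_a = g_b$, hence $(g_a, g_{b+1})$ is a transition, and excising $g_{a+1}\dots g_b$ yields a strictly shorter valid interval that still extends $I$ and, by determinism, has final running type $\rho_n$, hence the same type. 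Invariance preserves $\models \psi$, and iterating reaches the bound; the cases $X = A$ and $X = N$ are identical except that position $1$ is kept fixed to preserve $\first(I')$, giving the even smaller bound $f^{IS}(\psi)$.

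The main obstacle is making the epistemic components left-to-right computable within the $f^{IS}(\psi)$ budget, and then pumping through them. For a top-level $K_i\psi_j$ the successor-type component $\set{\tau_{\psi_j}(J') : J' \sim_i I'}$ depends on the whole sequence of agent-$i$ local states along $I'$, not just on the endpoints. I would absorb this into $\rho_m$ by a subset construction tracking the set of $\psi_j$-running-types reachable along all valid paths whose agent-$i$ local states agree with those of $I'$ up to position $m$; this set updates deterministically from its predecessor and the local state $l_i(g_m)$, and takes at most $2^{f^{IS}(\psi_j)}$ values, exactly the factor in the recurrence. The genuinely delicate point is $C_\Gamma$, where $\sim_\Gamma$ is the transitive closure $(\bigcup_{i\in\Gamma}\sim_i)^*$: the per-position update must close over all agents of $\Gamma$ while respecting the validity of the intermediate intervals in the connecting chains, and one must check that a single excision of positions $(a,b]$ can be performed uniformly in $I'$ and in every related interval at once (legitimate because $\sim_i$ matches intervals position by position and the cut reconnects at the common configuration $g_a = g_b$). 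Verifying that this uniform cut preserves relatedness and all nested successor-type components simultaneously, while keeping the running-type count within $f^{IS}(\psi)$, is the technical heart of the argument, and is precisely the place where the proof of Theorem~13 in \cite{LomuscioMichaliszyn14c} must be generalised to the regular-expression labelling.
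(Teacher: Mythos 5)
There is a genuine gap, and it sits exactly where your proposal concedes it: the pumping step. Your architecture up to that point coincides with the paper's (your interval types are its modal context trees $MCT_I^\varphi$; your Counting and Invariance Lemmas are its Lemmas~\ref{l:thelemmaone} and~\ref{l:thelemmatwo}), but your mechanism for showing that the excised interval has the same type is a deterministic left-to-right computation of the type, and this is not established. Two concrete problems. First, your claim that the $\naw{\bei}$-successor component ``depends only on the last configuration together with the automaton states'' is false whenever the subformula under $\naw{\bei}$ contains epistemic modalities: the types of extensions $I' \oplus J$ then depend on the sequence of local states along $I'$, not merely on its endpoint and automaton states, so two prefixes agreeing on your stated data can disagree on their $\naw{\bei}$-component. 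Second, and decisively, for $C_\Gamma$ your subset construction is not shown to exist: $\sim_\Gamma$ is the transitive closure of relations each of which synchronises one fixed agent across \emph{all} positions, so a chain witnessing $J \sim_\Gamma I'$ involves an unbounded number of intermediate intervals whose extendability to later positions is not captured by any bounded per-position frontier; in particular, prefixes related by $\sim_\Gamma$ need not extend to related full intervals, so tracking ``the set of running types of $\sim_\Gamma$-related prefixes'' does not update correctly, and nothing guarantees the construction stays within the $2^{f^{IS}(\psi_j)}$ budget. You explicitly defer exactly this (``the technical heart''), so the Short Witness Lemma, and with it the theorem, remains unproved in your proposal.

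The paper avoids the determinisation entirely by proving a right-congruence lemma instead (Lemma~\ref{l:thelemmathree}): if $MCT_I^\varphi = MCT_{I'}^\varphi$ and the same partial interval $J$ is appended to both via $\oplus$, then $MCT_{I\oplus J}^\varphi = MCT_{I'\oplus J}^\varphi$, proved by induction appending one configuration at a time. The epistemic cases (both $K_i$ and $C_\Gamma$) go through abstractly there, because $\sim_i$ relates intervals of equal length position by position (and chains for $\sim_\Gamma$ preserve this), so every interval related to $I \oplus s$ decomposes as $J' \oplus s'$ with $J'$ related to $I$, and equality of the successor-type \emph{sets} propagates without ever performing your ``uniform cut'' inside the related intervals. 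The pigeonhole is then applied to the $MCT$s of prefixes $I'_k, I'_l$ of the shortest witness with $|I| < k < l$, the witness is rewritten as $I'_l \oplus J$, and Lemmas~\ref{l:thelemmathree} and~\ref{l:thelemmatwo} yield the shorter witness $I'_k \oplus J$. If you recast your excision argument in this congruence form --- proving preservation of type equality under appending a common suffix, rather than deterministic left-to-right computability of the type --- your proof closes; as written, it does not.
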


\begin{proof}
Consider a model $M=\Me$.  For each $p \in \Var$ we denote by
$\A^p$ the minimal deterministic finite state automaton
\cite{HopcroftUllman79} recognising the language $\Lang{\lambda(p)}$. 
By $\A_w(p)$, where $p \in \Var$, we denote the state of $\A^p$ after
reading a word $w$; in the following, we treat $\A_w$ as a function from $\Var$ to automata states.

\begin{definition}[Modal Context Tree]
Given a model $M$, the \emph{modal context tree} of an interval $I$
w.r.t. an \ehsr{} formula $\varphi$, denoted by $MCT_I^\varphi$, is the
minimal unranked tree with labelled nodes and edges defined recursively as
follows.
\begin{itemize}
\item The root of the tree is labelled by the tuple $g(\first(I)), g(\last(I)), pi(I), \A_I$.
\item For each top-level sub-formula $X \psi$ of $\varphi$ and each
  interval $I'$ such that $I R_X I'$, the root of $MCT_I^\varphi$ has
  an $X\psi$-successor $MCT_{I'}^\psi$
  ($X$ indicates the labelling of an edge).
\end{itemize}
\end{definition}

In other words $MCT_I^\varphi$ contains sufficient information about
all the intervals that need to be considered to determine the value of
$\varphi$ in $I$ as well as the states of the automata after reading
$I$.

\begin{example}\label{e:mct}
Consider the \isrl{} $IS_{ex}$ from Example \ref{e:re1}, the
formula $\varphi = K_0 pi \wedge \neg \naw{A} p$, and an interval
$I=g_1$.

To build the modal context tree, we use the automaton for $\lambda(p)$
presented in Figure \ref{f:automaton}.

\begin{figure}
\centering
\begin{minipage}{.25\textwidth}
  \centering
  \begin{tikzpicture}[shorten >=1pt,->, scale=1]
  \tikzstyle{vertex}=[circle,draw=black,minimum size=16pt,inner sep=0pt]

\begin{scope}[shift={(1,0)}]
  \foreach \var/\x/\y/\name in {z_1/0.8/1/s1, z_2/2/1/s2, z_3/3.2/1/s3, z_\bot/2/-0.4/s4}
    \node[vertex] (G-\name) at (\x,\y) {$\var$};

  \draw (G-s1) to node[above] {$g_1$} (G-s2);    
  \draw (G-s1) to node[below,label={[label distance=-0.2cm]45:$g_2, g_3$}] {} (G-s4);    
  \draw (G-s2) to node[above] {$g_3$} (G-s3);    
  \draw[loop above] (G-s2) to node[above] {$g_1, g_2$} (G-s2);    
  \draw (G-s3) to node[below] {$*$} (G-s4);    
  \draw[loop below] (G-s4) to node[below] {$*$} (G-s4);    
  \draw (0.5, 1.5) to (G-s1);
    
\end{scope}
\end{tikzpicture}
  \captionof{figure}{A minimal automaton for $g_1(g_1+g_2)^*g_3$. $z_3$ is the only accepting state.}
  \label{f:automaton}
\end{minipage}%
\begin{minipage}{.05\textwidth}
~
\end{minipage}%
\begin{minipage}{.7\textwidth}
  \centering
  \begin{tikzpicture}[shorten >=1pt,->, scale=1]
  \tikzstyle{vertex}=[circle,draw=black,minimum size=16pt,inner sep=0pt]

\begin{scope}[shift={(0,0)}]

  \node[vertex,label=left:{$g_1,g_1, \top,\set{(p,z_2)}$}] (G-s1) at (0,4.6) {};

  \foreach \var/\vart/\x/\y/\name in {
{g_1,g_1, \top}/{\set{(p, z_2)}}/-4/3/s2,
{g_2,g_2, \top}/{\set{(p, z_\bot)}}/-2.2/3/s3,
{g_3,g_3, \top}/{\set{(p, z_\bot)}}/-0.4/3/s4,
{g_1,g_1, \top}/{\set{(p, z_2)}}/1/3/s5,
{g_1,g_3, \bot}/{\set{(p, z_3)}}/3/3/s7}
    \node[vertex,label={[align=center]below:{$\var$\\$\vart$}}] (G-\name) at (\x,\y) {};

  \node (G-s6) at (2,3) {\dots};

  \foreach \from/\to/\l in {s1/s2/{K_0 pi}\;\;\;,s1/s3/{K_0 pi}\;,s1/s4/{K_0 pi}}
    \draw (G-\from) edge node[left,pos=0.7] {$\l$} (G-\to);

  \foreach \from/\to/\l in {s1/s5/\naw{A}p,
  s1/s7/\;\;\;\naw{A}p}
    \draw (G-\from) edge node[right,pos=0.7] {$\l$} (G-\to);
\end{scope}

\end{tikzpicture}
  \captionof{figure}{$MCT_I^\varphi$ from Example \ref{e:mct}.
The omitted $\naw{A} p$ successors are labelled by: 
$g_1$, $g_2$, $\bot$, $\set{(p,z_2)}$;
$g_1$, $g_1$, $\bot$, $\set{(p,z_2)}$;
$g_1$, $g_1$, $\bot$, $\set{(p,z_\bot)}$;
$g_1$, $g_2$, $\bot$, $\set{(p,z_\bot)}$;
$g_1$, $g_2$, $\bot$, $\set{(p,z_\bot)}$.
}
  \label{f:examplet}
\end{minipage}
\end{figure}

The top level sub-formulas of $\varphi$ are $K_1 pi$ and $\naw{A} p$.
$MCT_I^\varphi$ (Figure \ref{f:examplet}) represents $I$. Notice that
there are infinitely many $R_A$ successors of $I$, but $MCT_I^\varphi$
needs only 7 $\naw{A} p$-successors.  For example, the successor
labelled by $g_1, g_2, \bot, \set{(p, z_2)}$ represents all the
intervals $I$ such that $g(I)$ is of the form $g_1 (g_1+g_2)^*$.
\end{example}

We now show that the number of modal context trees for a given formula
is bounded.  We will use this later as a kind of pumping argument and
show that is an interval is long enough, then some of its prefixes
have the same modal context tree.

 \begin{lemma}\label{l:thelemmaone}
 Given a model $M$ and a formula $\varphi$,
 $|\set{ MCT_I^\varphi \mid I \text{ is an
     interval in } M }| \allowbreak <f^{IS}(\varphi)$.
 \end{lemma}

\begin{proof} We show the lemma ~by induction on $\varphi$.
Clearly, if a formula has no modalities, then $\set{ MCT_I^\varphi \mid I \text{ is an interval in } M }$ contains trees with only one node,
that can be labelled with $2|G|^2 \prod_{q \in \Var} 2^{\card{\lambda(q)}}$ different labels.
 
Consider a formula $\varphi$
with the top-level sub-formulas $X_1 \varphi_1$, \dots, $X_k \varphi_k$.
Each tree for $\varphi$ consists of one of $2|G|^2 \prod_{q \in \Var} 2^{\card{\lambda(q)}}$ possible roots and,
for each $i$, any subset of subtrees for $\varphi_i$.  Therefore,  $ |\set{ MCT_I^\varphi \mid I \text{ is an
     interval in } M }|< 2|G|^2 \prod_{q \in \Var} 2^{\card{\lambda(q)}} 2^{f^{IS}(\varphi_1)} \dots
2^{f^{IS}(\varphi_k)} = f^{IS}(\varphi)$.
\qed
\end{proof}

We show that the modal context tree does not depend on the histories.

\begin{lemma}\label{l:thelemmatwopr}
Consider a model $M=\Me$ and a formula $\varphi$. If $I$ and $I'$ are
intervals such that $g(I)=g(I')$, then $MCT_I^\varphi=MCT_{I'}^\varphi$.
\end{lemma}
 \begin{proof}
 We show this by induction.
 
The roots of 
 $MCT_I^\varphi$ and $MCT_{I'}^\varphi$ 
have the same labels, since 
$g(\first(I))=g(\first(I'))$,
$g(\last(I))=g(\last(I'))$,
$pi(I)=pi(I')$ and the labelling is defined on $g(I)$.

Consider a $\naw{X} \varphi'$-successor $T$ of the root of $MCT_I^\varphi$, where 
$\naw{X} \varphi'$ is a top-level sub-formula of $\varphi$ and $X \in \set{A, \bei{}, N}$.
There is an interval $J$ such that $I R_X J$ and $MCT_J^{\varphi'} = T$.
So there exists a $J'$ such that $I' R_X J'$ and $g(J)=g(J')$, because $X$ is a ``forward modality'' so the $R_X$ successors of $I'$ do not depend on the history. 
By the inductive hypothesis, $MCT_J^{\varphi'} = MCT_{J'}^{\varphi'}$,
and therefore the roots of $MCT_I^{\varphi}$ and $MCT_{I'}{\varphi}$ have the same $\naw{X} \varphi'$ successors.

As for the $X \varphi'$ successors where $X$ is an epistemic modality, it is enough to observe that $I R_X I'$, and therefore $I$ and $I'$ are related to the same intervals by the equivalence relation $R_X$. 
The lemma follows.
\qed
\end{proof}

Now we argue that if two intervals have the same modal context tree w.r.t. $\varphi$, then either both satisfy $\varphi$ or none of them.

 \begin{lemma}\label{l:thelemmatwo}
Consider a model $M=\Me$ and a formula $\varphi$. If $I$ and $I'$ are
intervals such that $MCT_I^\varphi=MCT_{I'}^\varphi$, then $M, I
\models \varphi$ if and only if $M, I' \models \varphi$.
 \end{lemma}

\begin{proof}
We show it by induction on $\varphi$.  


\noindent\textbf{Case 1.} $\varphi=p$ for some variable $p$.   The
root of the $MCT_{I}^\varphi$ is labelled by the state of an automaton
corresponding to $\lambda(p)$ after reading $I$, and the root of the
$MCT_{I'}^\varphi$ is labelled by the state of an automaton
corresponding to $\lambda(p)$ after reading $I'$.  Since the two trees
are equal, the automaton is in the same state in both cases, either
accepting or rejecting, and therefore ${M}, I \models p$ if and only
if ${M}, I' \models p$.


\noindent\textbf{Case 2.} $\varphi=pi$. 
The root of the $MCT_{I}^\varphi$ is
labelled by $pi(I)$, and so is the root of $MCT_{I'}^\varphi$, and
therefore $pi(I)=pi(I')$.


\noindent\textbf{Case 3.} $\varphi=\neg \varphi'$. By the inductive
assumptions, ${M}, I \models \varphi'$ if and only if ${M}, I' \models
\varphi'$, so ${M}, I \models \varphi$ if and only if ${M}, I' \models
\varphi$.


\noindent\textbf{Case 4.} $\varphi=\varphi_1 \land \varphi_2$. By
the induction assumption, ${M}, I \models \varphi_1$ if and only if
${M}, I' \models \varphi_1$ and ${M}, I \models \varphi_2$ if and only
if ${M}, I' \models \varphi_2$, so ${M}, I \models \varphi$ if and
only if ${M}, I' \models \varphi$.


\noindent\textbf{Case 5.} $\varphi=K_i \varphi'$. Assume that $M, I
\models \varphi$.  Consider any interval $J'$ such that $I' \sim_i
J'$.  By definition, in the tree $MCT_{I'}^\varphi$ the subtree
$MCT_{J'}^{\varphi'}$ is a $K_i \varphi'$-successor of the root.  It
follows that in the tree $MCT_{I}^\varphi$(=$MCT_{I'}^\varphi$), $MCT_{J'}^{\varphi'}$ is
a $K_i\varphi'$-successor of the root.  Let $J$ be such that $I \sim_i
J$ and $MCT_{J'}^{\varphi'} = MCT_{J}^{\varphi'}$.  Clearly, since $M, I
\models \varphi$, $M, J \models \varphi'$.  By the inductive
assumptions, $M, J' \models \varphi'$.  Therefore $M, I' \models
\varphi$.


\noindent\textbf{Case 6.} $\varphi=C_G \varphi'$. Assume that $M, I
\models \varphi$ and $J'$ is such that $I' \sim_G J'$.  Again, in
$MCT_{I'}^\varphi$ the subtree $MCT_{J'}^{\varphi'}$ is a $C_G
\varphi'$-successor of the root.  It follows that in the tree
$MCT_{I}^\varphi$, $MCT_{J'}^{\varphi'}$ is a $C_G\varphi'$-successor
of the root.  Let $J$ be such that $I \sim_G J$ and
$MCT_{J'}^{\varphi'} = MCT_{J}^{\varphi'}$, then $M, J \models
\varphi'$, and by the inductive assumptions, $M, J' \models \varphi'$.
Therefore $M, I' \models \varphi$.


\noindent\textbf{Case 7.} $\varphi=\naw{A} \varphi'$.
We have ${M}, I \models \naw{A}\varphi'$ if and only if there is an
interval $J$ starting in $last(I)$ satisfying $\varphi'$.  Since
$g(last(I)) = g(last(I'))$, the intervals starting from $last(I)$ and
$last(I')$ are the same (modulo histories), and therefore there exists
an interval $J'$ starting in $last(I')$ such that $g(J)=g(J')$.  By
Lemma \ref{l:thelemmatwopr}, it follows that $MCT_J^{\varphi'} =
MCT_{J'}^{\varphi'}$.


\noindent\textbf{Case 8.} $\varphi=\naw{\bei} \varphi'$. Assume that
there is an interval $J$ such that $I R_{\bei{}} J$ and $M, J \models
\varphi'$.  Then, $MCT^{\varphi'}_J$ is an $\naw{\bei} \varphi'$
successor of the root in $MCT^{\varphi}_I$, and so in
$MCT^{\varphi}_{I'}$.  So there is an interval $J'$ such that $I'
R_{\bei{}} J'$ and $MCT^{\varphi'}_J = MCT^{\varphi'}_{J'}$.  By the
inductive hypothesis, $M, J' \models \varphi'$ and therefore $M, I'
\models \varphi$.


\noindent\textbf{Case 9.} $\varphi=\naw{N} \varphi'$.
This can be shown similarly to Case 7 for $\naw{A}\varphi'$.
\qed
\end{proof}

As we remarked earlier, if an interval $I$ is long enough, then $I$ has
two prefixes with the same modal context tree w.r.t. a formula
$\varphi$.  Intuitively speaking, we would like to replace the longer
prefix by the shorter one, thereby obtaining an interval $I'$, and
show that the modal context trees of $I$ and $I'$ are the same. By the
above lemma, it would follow that they both satisfy the given formula.
What remains to be proved is that if we have two prefixes with the
same modal context tree, and we append the same interval to both, the
results will also have the same modal context tree.

We use the following terminology.  A \emph{partial state} is a
sequence of states $g_1 \dots g_k$ such that for all $i < k$,
we have $t^G(g_{i}, g_{i+1})$.  Each state of the model is a partial
state; but partial states are not required to start at $g_0$.  A \emph{partial interval} is a sequence $s_1\dots
s_k$ of partial states such that for each $i < k$ we have that
$s_{i+1}=s_ig_i$ for some partial state $g_i$.  A partial interval
$I=s_1\dots s_k$ is \emph{clear} if $s_1=g$ for some partial state
$g$.  We extend the functions $\first$, $\last$, and $g$ and the other
notions to partial intervals in the obvious way.

We define the operation of adding context to partial intervals as
follows.  Given a partial interval $I$ and a clear partial interval
$I'=s_1 \dots s_{k}$ where $t^G(g(\last(I)), g(\first(I')))$, by $I
\oplus I'$ we denote the partial interval $I \bar{s}_1 \dots
\bar{s}_k$ such that for each $i$ we have that $\bar{s}_i = last(I)
s_i$.  So $\oplus$ joins two intervals in a way that accounts for the
history of the partial states.  Clearly, $I \oplus I'$ is an interval
if and only if $I$ is an interval.  We also define the operation
$\circ$ such that $I \circ I' = \bar{s}_1 \dots \bar{s_k}$, i.e., it
only returns the adjusted partial states of $I'$.

 \begin{lemma}\label{l:thelemmathree}
 Consider a model $M$, a formula $\varphi$, two intervals $I, I'$, and a
 partial interval $J$.  If $MCT_I^\varphi=MCT_{I'}^\varphi$, and
 $t^G(g(\last(I)), g(\first(J)))$, then $MCT_{I\oplus
   J}^\varphi=MCT_{I' \oplus J}^\varphi$.
 \end{lemma}

\begin{proof}
Consider a formula $\varphi$, a model $M$, two intervals $I$, $I'$ and a
partial state $s=g$ such that $t^G(g(\last(I)), g)$.  We show that if
$MCT_{I}^\varphi = MCT_{I'}^\varphi$, then $MCT_{I \circ s}^\varphi =
MCT_{I' \circ s}^\varphi$.  The consideration above can be used to
prove the lemma by induction.

Assume that the root of $MCT_{I}^\varphi$ is labelled by $f$, $l$,
$pi$, $\A_I$.  Then the roots of both $MCT_{I\circ s}^\varphi$ and
$MCT_{I'\circ s}^\varphi$ are labelled by $f$, $g$, $\bot$, $\A$,
where for each $p\in\Var$ we put $\A{}(p)$ equal to the state that the
automaton for $p$ reaches from $\A_I(p)$ after reading $g$.

Assume that $X_1\varphi_1$, \dots, $X_k \varphi_k$ are the top-level
sub-formulas of $\varphi$ and $i\in \set{1, \dots, k}$ (if there are no
such formulas, then the result follows directly).  We show that for
each $i$, the roots of $MCT_{Is}^\varphi$ and $ MCT_{I's}^\varphi$
have the same $X_i\varphi_i$-successors.


\noindent\textbf{Case 1.} $X_i$ is an epistemic modality.
Consider any interval $J$ such that $I \oplus s R_{X_i} J$.
Let $J=J' \oplus s'$. 
By the definition, $J' R_{X_i}  I$ and $s R_{X_i} s'$.
By the former, 
we have that $MCT_{J'}^{\varphi_i}$ is an $X_i\varphi_i$-successor of the root in $MCT_{I \oplus s}^\varphi$, and so $MCT_{J'}^{\varphi_i}$ is an $X_i\varphi_i$-successor of the root in $MCT_{I'}^\varphi$.
So there is $J'' R_{X_i} I'$ such that $MCT_{J'}^{\varphi_i}=MCT_{J''}^{\varphi_i}$.
Therefore, $J'' \oplus s' R_{X_i} I' \oplus s$, and thus $MCT_{J}^{\varphi_i}$ is the $X_i\varphi_i$-successors of the root of $MCT_{I\oplus s}^\varphi$.


\noindent\textbf{Case 2.} $X_i = \naw{A}$.  Consider any interval $J$
such that $I \oplus s R_A J$.  Then there is a clear partial interval
$\bar{J}$ such that $J = I \circ \bar{J}$.  Let $J' = I' \circ
\bar{J}$.  It holds that $I' \circ s R_A J'$.  By Lemma
\ref{l:thelemmatwopr}, we have
$MCT_{J}^{\varphi_i}=MCT_{J'}^{\varphi_i}$.

Therefore, the $\naw{A}\varphi_i$-successors of the root in
$MCT_{I\oplus s}^\varphi$ are also $\naw{A}\varphi_i$-successors of
the root in $MCT_{I'\oplus s}^\varphi$.  The other direction is
similar.


\noindent\textbf{Case 3.} $X_i = \naw{\bei}$.  Consider any interval
$J$ such that $I \oplus s R_{\bei} J$.  Then, there is a clear partial
interval $\bar{J}$ such that $J = (I\oplus s) \oplus \bar{J}$.

Let $J' = (I' \oplus s) \oplus \bar{J}$.  It holds that $I' \oplus s
R_{\bei} J'$.  By Lemma \ref{l:thelemmatwopr}, we have
$MCT_{J}^{\varphi_i}=MCT_{J'}^{\varphi_i}$.

Again, we conclude that the $\naw{\bei{}}\varphi_i$-successors of the root in $MCT_{I \oplus s}^\varphi$ are the same as $\naw{\bei}\varphi_i$-successors of the root in $MCT_{I \oplus s}^\varphi$.


\noindent\textbf{Case 4.}  $X_i = \naw{N}$.
The proof is similar to the one of Case 2 for $X_i = \naw{A}$.
\qed
\end{proof}

By exploiting the Lemma above, we can now give the main result of
this section.

The proof of Theorem \ref{t:equivalence}
 is by induction on the structure of $\varphi$.

The cases for $\varphi=p$, $\varphi=pi$, $\varphi=\neg \varphi'$,
$\varphi=\varphi_1 \land \varphi_2$, $\varphi=K_i \varphi'$, and
$\varphi=C_G \varphi'$ for some sub-formulas $\varphi', \varphi_1,
\varphi_2$, follow from the fact that the semantic rules are the same
in both semantics.

Assume that $\varphi=X \varphi'$ for some $\varphi'$, and $X \in
\naw{A}, \naw{\bei}, \naw{N}$. 
If $M, I \modelsb \varphi$, then there is an
interval $I'$ of bounded size such that $M, I' \modelsb \varphi'$
and $I R_X I'$.  By the induction hypothesis, $M, I' \models
\varphi'$ and therefore $M, I \models \varphi$.  

If $M, I \models \varphi$, then there is an interval $I'$ such that
$M, I' \models \varphi'$ and $I R_X I'$.  Let $I'$ be the shortest
possible interval with this property.  We show that $|I'|\leq
|I|+f^{IS}(\varphi)$.

Let $I'=s_1 \dots s_t$ and $I'_k$ denote the prefix $s_1 \dots s_k$ of
$I'$.  Assume that $|I'|>|I|+f^{IS}(\varphi')$.  By
Lemma~\ref{l:thelemmaone} there are two prefixes $I'_k$, $I'_l$ such
that $|I|<k<l$ and $MCT_{I'_k}^{\varphi'}=MCT_{I'_l}^{\varphi'}$.

Let $J$ be a clear partial interval such that $I' = I'_l \oplus J$.
By Lemma~\ref{l:thelemmathree}, we have that $MCT_{I'_k \oplus
  J}^{\varphi'}=MCT_{I'_l \oplus J}^{\varphi'}$ Clearly, $|I'_k \oplus
J|<|I'|$ and, by Lemma \ref{l:thelemmatwo}, $M, I'_k \oplus J \models
\varphi'$.  Since $k>|I|$, it follows that $I R_X I'_k \oplus J$ (the
condition $k>|I|$ is only required for $\naw{\bei}$ since $J$ has to
contain $I$ as a prefix).  But we assumed that $I'$ was the shortest
interval; so this is a contradiction. It follows that $|I'|\leq
|I|+f^{IS}(\varphi)$.
\qed
\end{proof}

Finally, the proof of Theorem \ref{t:abldec} goes as follows.
By Theorem~\ref{t:equivalence}, the bounded semantics and the
unbounded semantics are equivalent.  By Theorem~\ref{t:boundeddec},
model checking the $A\bar{B}LN$ fragment of \ehs{} with bounded
semantics is decidable.  Therefore, model checking the $A\bar{B}LN$
fragment of \ehs{} with unbounded semantics is also decidable.
Indeed, the procedure \textsc{Verify} given in
Algorithm~\ref{a:modelchecking} solves the problem.

\end{document}